\newcommand*{\addFileDependency}[1]{
  \typeout{(#1)}
  \@addtofilelist{#1}
  \IfFileExists{#1}{}{\typeout{No file #1.}}
}
\DeclareMathOperator*{\argmax}{arg\,max}
\DeclareMathOperator*{\argmin}{arg\,min}
\newcommand{\Cov}{\textrm{Cov}}
\newcommand{\E}{\textrm{E}}
\newcommand{\pred}{\textrm{pred}}
\theoremstyle{plain}
\newtheorem{theorem}{Theorem}
\theoremstyle{definition}
\newtheorem{proposition}{Proposition}[section]
\theoremstyle{remark}
\newcommand{\mbf}{\mathbf}
\newcommand{\blind}{1}
\begin{document}

\date{}

\def\spacingset#1{\renewcommand{\baselinestretch}%
{#1}\small\normalsize} \spacingset{1}


\if1\blind
{
  \title{\bf $\mbf{P^3LS}$: Point Process Partial Least Squares}
  \author{Jamshid Namdari\thanks{
    Corresponding author Jamshid Namdari, Department of Biostatistics \& Bioinformatics, Emory University, Atlanta, GA 30322 (e-mail:jamshid.namdari@emory.edu). This work is supported by National Institutes of Health grants R01GM140476, R01HL159213 and R01MH125816.} \\
    Department of Biostatistics \& Bioinformatics, Emory University\\
    Robert T. Krafty \\
    Department of Biostatistics \& Bioinformatics, Emory University\\
    and\\
    Amita Manatunga \\  
    Department of Biostatistics \& Bioinformatics, Emory University}
  \maketitle
} \fi

\if0\blind
{
  \bigskip
  \bigskip
  \bigskip
  \begin{center}
    {\LARGE\bf $\mbf{P^3LS}$: Point Process Partial Least Squares}
\end{center}
  \medskip
} \fi

\bigskip
\begin{abstract}

Many studies collect data that can be considered as a realization of a point process. 
Partial least squares (PLS) is a popular analytic approach that combines features from linear modeling as well as dimension reduction to provide parsimonious prediction and classification. However, existing PLS methodologies do not include the analysis of point process predictors. In this article, we introduce point process PLS ($P^3LS$) for analyzing latent time-varying intensity functions from collections of inhomogeneous point processes. We develop a novel estimation procedure for $P^3LS$ that utilizes the properties of log-Gaussian Cox processes and  examine its empirical properties via simulation studies. In addition, we establish  the predictive consistency for the $P^3LS$. Finally, we apply the proposed method to a renal radionuclide imaging study to predict kidney obstruction from patient renograms.
 The method yields a clinically meaningful and interpretable model, demonstrating its potential to support data-driven clinical decision-making in renal studies.
\end{abstract}

\noindent%
{\it Keywords:} Partial least squares; Point process; Log Gaussian Cox process; Functional linear model; Adaptive dimension reduction.

\vfill

\newpage
\spacingset{1.9} 

\section{Introduction} \label{sec:Introduction}
Partial Least Squares (PLS), originally proposed by \cite{Wold196},
has emerged as a promising strategy for predicting a response in terms of a covariate.  Under linear regression, the classical PLS approach targets on maximizing predictive power while achieving dimension reduction in a supervised manner to
extract a set of orthogonal latent factors from predictors.
The method has found particular popularity in chemometrics \citep{mehmood2016diversity} as well as in many other scientific fields including bioinformatics \citep{boulesteix2007partial,datta2018exploring}, food science \citep{castro2020partial}, pharmacology \citep{luco1999prediction},  and psychology \citep{ronkko2015adoption}.
To broaden its applicability, several extensions have been proposed to accommodate more complex data structures, such as multilevel and functional data.
For more on the developments and applications in scientific fields refer to \cite{rosipal2005overview}, \cite{abdi2010partial}, and \cite{krishnan2011partial}. 
Recent technological advancements have enabled the generation of increasingly complex data structures, such as medical images, which can significantly enhance disease diagnosis and improve outcomes in scientific research. For example, medical images developed using radioactive agents often produce data that can be viewed as realizations of point processes. Incorporating this information from the data generation process is likely to improve the interpretation and prediction of clinical outcomes of interest. To the best of our knowledge, PLS's extension to incorporate temporal point processes has not been explored.

Point process data is a realization of a random set of points in a specified space such as time or a plane. In particular, temporal point process data can be viewed as the recorded times at which events of interest occur. Common illustrative examples include arrival times of a patient to an emergency room and patient engagement with mobile phone-based telehealth applications.
Our motivating study, described in Section \ref{sec:Data_Analysis}, involves a radionuclide imaging experiment in which subject-specific temporal point process data, photon detection times are collected to aid radiologists in interpreting potential kidney obstruction.
In this imaging protocol, data acquisition begins immediately following the intravenous administration of the gamma-emitting radiotracer Technetium-99m mercaptoacetyltriglycine (99mTc-MAG3). This tracer is extracted from the bloodstream by the kidneys and then passes through the ureters to the bladder. As MAG3 travels through the kidneys, emitted photons are detected by a gamma camera, producing a sequence of arrival times of photons over the 24-minute period.
For each subject, the resulting data comprise a sequence of photon arrival times, representing a realization of a temporal point process that captures the tracer dynamics through the renal system.
Following image acquisition, expert radiologists interpret the scans and assign a continuous score to each kidney, reflecting the degree of suspected obstruction. These scores range from -1 to 1, where values near 1 indicate high confidence in the presence of obstruction, while values near -1 suggest high confidence in the absence of obstruction.
Accurate interpretation of renogram data demands substantial expertise in renal physiology and MAG3. However, diagnostic performance can vary considerably, especially in settings where most evaluations are performed by radiologists with limited nuclear medicine training and without expert consultation.  
Reported inter-reader discrepancies range from $9\%$ to $72\%$ \citep{taylor2008diagnostic}. A challenge is the lack of a definitive gold standard for diagnosing obstruction with radionuclide imaging, which further contributes to interpretive variability. In this context, developing an interpretable predictive model that supports clinical decision-making could be highly valuable.
Such a model aims to support clinical decision-making, particularly in settings where radiologists may have limited experience, by serving as a quantitative, data-driven second opinion to assist in the interpretation of  kidney obstruction.

Towards the goal of fitting an interpretable predictive model for predicting a response of interest (degree of obstruction) based on a realization of a point process, we formulate a sensible model, via linear functional model (\ref{eq:Regression_model}) in Section \ref{sec:Methodology}, which relates the response based on the unobserved log-intensity functions of the underlying point process. This linear model could be viewed as an extension of the functional linear model of \cite{Delaigle_Hall_2012} to point process data.
To establish a functional linear relationship between the predictors, which are log-intensity functions of the underlying point processes, and the response,
our approach  selects basis functions adaptively to maximize the predictive power of the linear model.  In contrast to the common approach of representing functions in a pre-selected set of basis functions \citep{Ramsay_Silverman_2005}, our result yields a more interpretable model that is parsimoniously optimal.
Two key challenges in fitting such models, where latent log-intensity functions are not directly observed, are (1) the estimation of the covariance function of the generating log-intensity process, which is necessary for the estimation of the parsimonious basis, and (2) the prediction of individual log-intensities  within the parsimonious basis, which are necessary for estimating the coefficient quantifying the association between log-intensities and the outcome. We develop a novel procedure that utilizes the properties of the log Gaussian Cox process \citep{moller1998log} to construct  efficient estimators of these functional quantities, and represents the first extension of PLS to incorporate log Gaussian Cox point process predictors.

This paper is organized as follows. In Section \ref{sec:Methodology} we describe our approach to point process partial least squares and present its asymptotic properties. To illustrate the performance of the proposed methodology and compare it to alternative approaches, we present a simulation study in Section \ref{sec:Simulation}.  Section \ref{sec:Data_Analysis} contains a detailed description of the renal study that motivated our methodological development and the results of the application the $P^3LS$ to the study data. 
We conclude this paper, in Section \ref{sec:Discussion}, by a discussion of the limitations and strengths of the proposed method and possible directions of further work related to point process partial least squares.

\section{Methodology} \label{sec:Methodology}
The data considered here are $n$ independent pairs $(\Phi_1,Y_1),\dots,(\Phi_n,Y_n)$, where $Y_1,\dots,Y_n$ are the outcomes (scalar) and $\Phi_1,\dots,\Phi_n$ are realizations of the point process, described in the next paragraph. We seek to build a predictive model that can predict $Y_i$'s using features in the $\Phi_i$'s that parsimoniously represent the dynamics of the point processes. This is achieved by modeling $Y_i$'s through a linear model incorporating latent intensity functions, $\lambda_i(.)$, that govern the dynamics of the process $\Phi_i$, and applying the partial least squares regression, as described in this section after providing a concise introduction to the point process concepts. The partial least squares method along with the estimation procedure proposed in Section \ref{sec:Estimation} will be referred to as $P^3LS$ in this article. In what follows, we describe our model building and model fitting procedure.  
    
Consider an event of interest, such as detection of a photon by a gamma camera, since the beginning of scanning, or time of engaging with a mobile application, since registration in a study. For each subject  $i=1,\dots,n$, the set of times of occurrence of the event of interest $\Phi_i=\{S_{i1},S_{i2},\dots\}\subset\mathbb{R}$, can be viewed as a random set that is referred to as a point process. Let $\mathcal{I}$ be the time interval over which the point process is observed. Note that, $\mathcal{I}$ can be any compact subset of the real line, including a simple continuous interval, or more complicated structures. The point process $\Phi_i$ can be described in terms of the total number of occurrences of the event of interest in any interval $B\subseteq\mathcal{I}$, such as $B=(t_1,t_2], t_1,t_2\in\mathcal{I}$, which we denote by $N_i(B)$. This allows us to study the statistical properties of the point process by modeling the probability distribution of $N_i(.)$ over any measurable subset of the real line. We assume that $\Phi_i$'s follow the construction of Poisson processes, that is $\mathcal{I}$ can be divided into small subintervals of length $\Delta$ where in each interval {e.g. $(t,t+\Delta], t\in\mathcal{I},$} the event of interest occurs with a positive probability, but the probability of occurrence of more than one event is negligible. In addition, we would like to allow the probability of occurrence of the event  of interest to depend on time, in other words the instantaneous probability of occurrence of an event at each time point, $t$, can be a function $\lambda_i(t)$ of time.
Finally, we assume that the occurrence of an event at time $t$ does not have excitatory or inhibitory effect on the occurrence of an event at a later time. In other words, we can assume that for two non-overlapping intervals $B_1$ and $B_2$, $N_i(B_1)$ and $N_i(B_2)$ are independent. Note that $\lambda_i(t)$ can be viewed as $\lim_{\Delta\to 0}E[N_i(t,t+\Delta)]/\Delta$, which indicates the rate at which the event occurs at time $t$. Lastly, to account for variability among experimental units, we assume that $\lambda_i(t)$ is a random function such that $\log\left[\lambda_i(t)\right]$ follows a Gaussian process. The point process described above is called log-Gaussian Cox process.

To be more precise, consider the point processes $\Phi_1,\dots,\Phi_n$. For each $\Phi_i, i=1,\dots,n$, let $N_i(B) = \textrm{card}\{\Phi_i\cap B\}$ be the number of events of $\Phi_i$ in a Borel set $B\subset\mathbb{R}$ and define the intensity measure of $\Phi_i$ to be $\Lambda_i(B)=\E[N_i(B)]$ with the intensity function $\lambda_i(x)$, i.e. $\Lambda_i(B)=\int_B\lambda_i(x)dx$. $\Phi_i$ is called a Poisson process on $\mathcal{I}\subset\mathbb{R}$ with intensity measure $\Lambda_i$ if for any $B\subset \mathcal{I}$ 
\begin{itemize}
    \item $N_i(B)$ is Poisson distributed with mean $\Lambda_i(B)$,
    \item conditional on $N_i(B)$, the points in $\Phi_i\cap B$ are iid with density proportional to $\lambda_i(x),\; x\in B$.
\end{itemize}
Moreover, $\Phi_i$ is called a Cox process driven by a non-negative process $\lambda_i$ if, conditional on $\lambda_i$, $\Phi_i$ is a Poisson process with intensity function $\lambda_i$. In this paper we consider the log Gaussian Cox Poisson process, introduced by \cite{moller1998log}, where $\log\left[\lambda_i(t)\right]= \log\left[\lambda_0(t)\right] + \Xi_i(t)$, and $\Xi_i$`s are zero-mean independent Gaussian processes with a common covariance function $K(s,t) = \textrm{Cov}\left[\Xi_i(s),\Xi_i(t)\right]$. Suppose $\Phi_1,\dots,\Phi_n$ are $n$ realizations of a log-Gaussian Cox process driven by $\lambda_1,\dots,\lambda_n$, respectively. In this article, we denote the log-intensities by $X_i:=\log(\lambda_i)$.

To introduce the predictive model, consider the independent pairs $(X_1,Y_1),\dots,(X_n,Y_n)$, where $X_i,i=1,\dots,n$ are the log-intensity functions defined on the nondegenerate, compact interval $\mathcal{I}$ and satisfying $\int_\mathcal{I}\E(X_i^2) < \infty$, and $Y_1,\dots,Y_n$ are scalar random variables generated by the following linear model
\begin{equation} \label{eq:Regression_model}
    Y_i = a + \int_{\mathcal{I}}b(t)X_i(t)\;dt + \epsilon_i, \quad i=1,\dots,n.
\end{equation}
Here $a$ is a scalar parameter, $\epsilon_i, i=1,\dots,n$ are iid random variables with finite second moment such that $\E(\epsilon_i|X_i)=0$, and $b$, a function valued parameter, is a square integrable function on $\mathcal{I}$. 
The class of square integrable functions on $\mathcal{I}$ we consider here, denoted by $\mathcal{C}(\mathcal{I})$, is equipped with the inner product and the norm defined as $\langle u,v \rangle := \int_{\mathcal{I}}u(s)K(s,t)v(t)dsdt$ and $\|u\| := \sqrt{\langle u,u\rangle}$, where $K(s,t) = \Cov\left[X_i(s),X_i(t)\right]$, for $u,v\in\mathcal{C}(\mathcal{I})$. Note that the condition $\E(\epsilon_i|X_i)=0$ implies $a=\E(Y_i)-\int_\mathcal{I}b(t)\E\left[X_i(t)\right]\;dt$, so $Y_i=\E(Y_i) + \int_\mathcal{I}b(t)\left\{X_i(t)-\E\left[X_i(t)\right] \right\}\;dt + \epsilon_i$.

To estimate the coefficient function in (\ref{eq:Regression_model}), typically one expands $X_i$'s and $b$ in a system of orthonormal basis functions, $\{\psi_1,\psi_2,\dots\}$, and estimate $b$ by finding optimal coefficients in the truncated expansion, $b_p$ of $b$, where 
\begin{equation} \label{eq:coefficient_function_expansion}
        b_p = \sum_{j=1}^{p}\beta_j\psi_j, 
\end{equation}
and $\beta_1,\dots,\beta_p$ are coefficients corresponding to basis functions. Note that, by approximating $b$ with $b_p$, the truncated form of the linear functional $a+\int_\mathcal{I}b(t)X_i(t)\;dt$ can be written as,  say $g_p(X_i)$, where
\begin{equation} \label{eq:g_p}
        g_p(X_i) := \E(Y_i) + \sum_{j=1}^{p}\beta_j\int_\mathcal{I}\left\{X_i(t)-\E\left[X_i(t)\right]\right\}\psi_j(t)\; dt.
\end{equation}
Then, we can approximate $Y_i$ by $g_p(X_i)+\epsilon_i$ and determine $\beta_1,\dots,\beta_p$ through the least squares method, i.e. by minimizing
\begin{equation} \label{eq:beta_hat}
    \beta_1,\dots,\beta_p = \argmin_{w_1,\dots,w_p}\frac{1}{n}\sum_{i=1}^{n}\left\{Y_i^c-\sum_{j=1}^{p}w_j\int_{\mathcal{I}}X_i^c(t)\psi_j(t)\;dt\right\}^2,
\end{equation}
where $Y_i^c = Y_i-\bar{Y}$, $X_i^c(t) = X_i(t)-\bar{X}(t)$, and $\bar{X}(t) = \sum_{j=1}^{n}X_j(t)/n$.

One adaptive procedure for selecting the basis functions that captures both the covariance structure of $X_i$'s as well as the linear relationship between $Y_i$ and $X_i$ is through the Partial Least Squares (PLS) regression. \cite{Delaigle_Hall_2012} proposed a functional partial least squares procedure for constructing the basis functions $\psi_1,\dots,\psi_p$ in a sequential manner, such that for $p=1$, $\psi_1$ is determined so that $\|\psi_1\|=1$ and $\Cov\left\{Y_i-E(Y_i),\right.$ $\left.\int_\mathcal{I}\left[X_i(t)-E(X_i(t))\right]\psi_1(t)\;dt\right\}$ is maximized. Note that, when $p=1$, $b_1 = \beta_1\psi_1$ and the linear model (\ref{eq:Regression_model}) reduces to simple linear regression model $$Y_i = \E(Y_i) + \beta_1\int_\mathcal{I}\left\{X_i(t)-\E\left[X_i(t)\right] \right\}\psi_1(t)\;dt + \epsilon_i,$$ where $\beta_1$ can be determined by the least squares method. Next, the second PLS basis function, $\psi_2$, is determined such that it is orthogonal to $\psi_1$ and the covariance between the deflated response (response after removing the linear effect of $\psi_1$ on the $Y_i$) and the projected data onto $\psi_2$ is maximized, i.e. $\psi_2 = \argmax_{\psi}\Cov\{Y_i-g_1(X_i), \int_\mathcal{I} \left(X_i(t)-\E\left[X_i(t)\right]\right)\psi(t)\}$ such that $\langle \psi_1,\psi_2 \rangle=0$ and $\|\psi_2\|=1$, where $g_1(X_i) = \E(Y_i) + \beta_1\int_\mathcal{I}\left\{X_i(t)-\E\left[X_i(t)\right]\right\}\psi_1(t)\; dt$. Sequentially, in the same manner, the $p$-th PLS basis function is constructed such that the covariance functional 
\begin{equation} \label{eq:Cov_functional}
    f_p(\psi_p) = \Cov\left\{Y_i-g_{p-1}(X_i),\int_\mathcal{I}X_i(t)\psi_p(t)\;dt\right\},
\end{equation}
is maximized subject to $\|\psi_p\|=1$ and $\langle \psi_j,\psi_p \rangle=0$ for $1\leq j\leq p-1$, where $g_p$ and $b_p$ are defined in equations (\ref{eq:g_p}) and (\ref{eq:coefficient_function_expansion}), representing the truncated expansions of the linear functional $a+\int_\mathcal{I}b(t)X_i(t)\;dt$ and the coefficient function $b$, with respect to $\psi_1,\dots,\psi_p$, respectively. For each $p\in\mathbb{N}$, $\beta_1,\dots,\beta_p$ are obtained by minimizing the mean squared error of prediction as in (\ref{eq:beta_hat}).
    
An interesting property of the PLS basis functions is that for each $p\geq 1$, the linear representation of any function in $\psi_1,\dots,\psi_p$ is equivalent to representing it as a linear combination of $K(b),\dots,K^p(b)$, where 
\begin{align} \label{eq:K(b)}
\begin{split}
    K(b)(t) &= \int_{\mathcal{I}}b(s)K(s,t)\;ds,  \\
K^j(b)(t) &= \int_{\mathcal{I}}K^{j-1}(s)K(s,t)\;ds, \quad j>1.
\end{split}
\end{align}
See \cite{Delaigle_Hall_2012} for more details. This motivates considering basis functions $\psi_1,\psi_2,\dots$ that are obtained by applying the modified Gram-Schmidt orthonormalization procedure (outlined in the Supplementary Materials) to $K(b),K^2(b),\dots$.
We adopt the above procedure for constructing PLS basis functions for the $P^3LS$ procedure in this paper.

In the context of point process data, we note that log-intensity functions, $X_1,\dots,X_n$, are not observable. That is, we need to estimate $X_j,j=1,\dots,n$, in addition to the covariance function, $K(s,t)$, using the realizations of the point process that we describe in the next section.

\subsection{Estimation Procedure} \label{sec:Estimation}
\subsubsection{Estimation of the Covariance Functions} \label{subsec:Cov_est_PP}
We proceed with estimation of the covariance function of the log-intensities by using their relation to the second order intensities, denoted and defined as $\rho_{i,j}^{(2)}(s,t) := \E[\lambda_i(s)\lambda_j(t)]$ for $i,j=1,\dots,n$, accompanied with an application of  Campbell's Theorem \citep{Daley_Vere-Jones_2003}.  

Using the moment generating function of the normal distribution 
\begin{equation}
    \E[\lambda_i(s)\lambda_i(t)] = \E[\lambda_i(s)]\E[\lambda_i(t)]\exp\left\{K(s,t)\right\}, \quad \mbox{ for } i=1,\dots,n.
\end{equation}
Thus,
\begin{equation}\label{eq:c(s,t)}
    K(s,t) = \log\frac{\E[\lambda_i(s)\lambda_i(t)]}{\E[\lambda_i(s)]\E[\lambda_i(t)]}, \quad \mbox{ for } i=1,\dots,n.
\end{equation}
In addition, since the $n$ random intensity functions are independent, $\E[\lambda_i(s)\lambda_j(t)] = $ \\ $\E[\lambda_i(s)]\E[\lambda_j(t)]$ for all $i\neq j$ and $\E[\lambda_i(t)]=\E[\lambda_j(t)]$ for all $i,j=1,\dots,n$. Thus, we can rewrite (\ref{eq:c(s,t)}) as
\begin{equation} \label{eq: c(s,t)_rho_relation}
    K(s,t) = \log\frac{\rho_{i,i}^{(2)}(s,t)}{\rho_{i,j}^{(2)}(s,t)}, \quad \mbox{ for } i,j=1,\dots,n \mbox{ and } i\neq j.    
\end{equation}
To estimate the second order intensity function $\rho_{i,j}^{(2)}(s,t)$, we invoke to the Campbell`s Theorem, which states that, for any measurable function, $f(u,v)$
\begin{equation} \label{eq:Campbell}
    \E\left[ \sum_{u\in\Phi_i}^{u\neq v}\sum_{v\in\Phi_j} f(u,v) \right] = \int\int f(u,v)\rho_{i,j}^{(2)}(u,v)\;du\;dv,
\end{equation}
where the expectation is over the point processes $\Phi_i$ and $\Phi_j$. \cite{xu2020semi} have shown that a consistent estimate of $\rho_{i,j}^{(2)}(s,t)$ can be obtained by selecting $f(u,v) = \kappa_h(s-u)\kappa_h(t-v)/[a(s;h)a(t;h)]$, where $\kappa(.)$ is a kernel function, $\kappa_h(u)=\kappa(u/h)/h$, and $a(s;h)=\int\kappa_h(s-x)dx$ is an edge correction term. This enables us to estimate $K(s,t)$, denoted by $\hat{K}(s,t)$, by the plug in estimator where  the numerator, $E[\lambda_i(s)\lambda_i(t)]$, and the denominator, $\E[\lambda_i(s)]\E[\lambda_i(t)]$, of (\ref{eq:c(s,t)}) are estimated by $\widehat{\E[\lambda_i(s)\lambda_i(t)]}$ and $\widehat{\E[\lambda_i(s)]\E[\lambda_i(t)]}$, respectively, given by
\begin{align}
  \widehat{\E[\lambda_i(s)\lambda_i(t)]} &= \frac{1}{n}\sum_{i=1}^{n}\sum_{x\in\Phi_i}^{x\neq y}\sum_{y\in\Phi_i} \frac{\kappa_h(s-x)\kappa_h(t-y)}{a(s;h)a(t;h)}, \\
  \widehat{\E[\lambda_i(s)]\E[\lambda_i(t)]} &= \frac{1}{n(n-1)}\underset{\begin{subarray}{l}
  i,j=1,\dots,n \\ 
  \quad i\neq j
  \end{subarray}}{\sum\sum} \; \sum_{x\in\Phi_i}^{x\neq y}\sum_{y\in\Phi_j} \frac{\kappa_h(s-x)\kappa_h(t-y)}{a(s;h)a(t;h)},
\end{align}
to obtain
\begin{equation} \label{eq:K_hat}
    \hat{K}(s,t) = \log\frac{\widehat{E[\lambda_i(s)\lambda_i(t)]}}{\widehat{\E[\lambda_i(s)]\E[\lambda_i(t)]}}.
\end{equation}

\subsubsection{Estimation of the Intensity Functions}\label{subsubsec:est_intensity}
Given the eigenfunctions $\phi_1,\phi_2,\dots$ of the covariance function $K(.,.)$, the log-intensities can be expanded as 
\begin{equation} \label{eq:log_intensity_expansion}
    \log\left[\lambda_i(t)\right] = \sum_\ell \xi_{i\ell}\phi_\ell(t).
\end{equation}
This motivates considering the following method for estimating the scores $\xi_{i\ell}, \ell=1,2,\dots$. We first partition the time interval, $\mathcal{I}$, into bins $B_1,\dots,B_b$ and denote the midpoint of each bin by $\bar{t}_1,\dots,\bar{t}_b$. Let $N_{i\ell},\ell=1,\dots,b$ be the number of events in the point process $\Phi_i$ falling in $B_\ell$. Note that, $N_{i\ell}\approx \textrm{Poisson}\left[\lambda_i(\bar{t}_\ell)\left|B_\ell\right|\right]$. Thus, $\E[N_{i\ell}/|B_\ell|]\approx\lambda_i(\bar{t}_\ell)$. Therefore, we can  consider the following log-linear model
\begin{equation}\label{eq:log-linear_model}
    \log\E[N_{i\ell}/|B_\ell|]) = \sum_\ell \xi_{i\ell}\phi_\ell(\bar{t}_\ell).
\end{equation}
Note that estimates of the eigenfunctions of the covariance function $K(.,.)$ can be obtained from those of $\hat{K}(.,.)$, defined in (\ref{eq:K_hat}). Suppose $\hat{K}(.,.)$ is evaluated on the grid $\mathcal{G} = \{t_1,\dots,t_T\}\star \{t_1,\dots,t_T\}$ and let $v_1,v_2,\dots$ be the corresponding eigenvectors of $\hat{K}$. Then, an estimate of the eigenfunctions $\phi_1,\phi_2,\dots$ evaluated on $\{t_1,\dots,t_T\}$ is $\hat{\phi}_\ell = v_\ell/\Delta$, where $\Delta = (t_T-t_1)/T$. This enables us to estimate the log-intensities by truncating (\ref{eq:log_intensity_expansion}) to the first $q$ terms, i.e. $X_i^{(q)} = \sum_{\ell=1}^{q} \xi_{i\ell}\phi_\ell(t)$, and plugging in $\hat{\phi_\ell}$ for $\phi_\ell$ and $\hat{\xi}_{i\ell}$, obtained through the log-linear model (\ref{eq:log-linear_model}), for $\xi_{i\ell}$, for $\ell=1,\dots,q$. The final estimate for $X_i$ is 
\begin{equation} \label{eq:X_hat}
    \hat{X}_i^{(q)} = \sum_{\ell=1}^{q} \hat{\xi}_{i\ell}\hat{\phi}_\ell(t).
\end{equation}

\subsubsection{Estimation of the Coefficient Function $b$}
Recall that the coefficient function $b$ can be estimated by truncation to the first $p$ terms of the expansion of $b$ with respect to the basis functions $\psi_1,\psi_2,\dots$, i.e. $b_p = \sum_{j=1}^{p}\beta_j\psi_j$. In addition, as described in Section \ref{sec:Methodology} following the equation (\ref{eq:K(b)}), $\psi_j$'s are obtained by applying the modified Gram-Schmidt algorithm to $K(b),K^2(b),\dots$. Here, we utilize the estimates obtained for $X_i$'s in (\ref{eq:X_hat}) and $\hat{K}(s,t)$ in (\ref{eq:K_hat}) to estimate $\psi_j$'s and $\beta_j$'s. To this end, first we estimate $K^{j}(b)(t)$ by $\hat{K}^{j}(b)(t),\; j\geq 1$ through
\begin{align*}
\hat{K}(b)(t) &= \frac{1}{n}\sum_{i=1}^{n}\left[\hat{X}_i^{(q)}(t)-\bar{\hat{X}}^{(q)}(t)\right]\left(Y_i-\bar{Y}\right) \\
\hat{K}^2(b)(t) &= \int_{\mathcal{I}}\hat{K}(b)(s)\hat{K}(s,t)ds \\
\hat{K}^{j+1}(b)(t) &= \int_{\mathcal{I}}\hat{K}^j(b)(s)\hat{K}(s,t)ds, \; j=1,2,\dots, 
\end{align*}
where $\bar{\hat{X}}^{(q)}(t) = \sum_{j=1}^{n}\hat{X}_i^{(q)}(t)/n$ and $\hat{K}(s,t)$ is estimated by (\ref{eq:K_hat}). Then we obtain the orthonormal basis $\psi_j,j=1,\dots$. Next we estimate $\beta_j$`s by solving (\ref{eq:beta_hat}). Denote the estimates obtained for $\beta_1,\dots,\beta_p$ by $\hat{\beta_1},\dots,\hat{\beta_p}$, then the final estimate for $b_p$ is 
\begin{equation}
    \hat{b}_p = \sum_{j=1}^{p}\hat{\beta}_j\hat{\psi}_j.
\end{equation}

The estimation procedure developed  depends on a number of  parameters. The parameters involved are $h$, the smoothing parameters, $q$, the number of terms used in (\ref{eq:X_hat}), $|B_\ell|$, size of the partitions of $\mathcal{I}$ as discussed in section \ref{subsubsec:est_intensity}, and $p$, the number of PLS basis functions used in estimation of the coefficient function $b$. To select $h$, one can consider cross validation where the optimal choice of the parameter minimizes the average mean square prediction error. Our simulation studies do not indicate that the mean square prediction error is sensitive to the choice of $h$. Next, $q$ can be selected large enough such that the variability explained by $\phi_1,\dots,\phi_q$ is above a specified threshold such as 90\% of the total variability. Theoretical considerations in Section \ref{subsec:Asymptotic_P3ls} suggest considering a partitioning of $\mathcal I$ such that the partition size is not larger than $|\mathcal{I}|n^{-1/2}$. Finally, an information criterion can be used to select $p$, which is supported by the linearity of the model along with an extra modeling assumption that the error process follows a Gaussian distribution.

Lastly, sometimes in practice, the set of event times (e.g. detection times in our motivating study) is not recorded. Instead, total counts within subintervals are available for analysis. More precisely, for the point process $\Phi_i$, one might only observe $N_i(B_j), j=1,\dots,J$ over the partition $\{B_1,\dots,B_J\}$ of the time interval $\mathcal{I}$ and not the event times $\{S_{i1},S_{i2},\dots\}$. In this case, when the intensity function is smooth and subintervals $B_j$ are narrow enough so that the intensity is approximately constant over $B_j$, one can invoke to the properties of the homogeneous Poisson point processes that conditional on the number of points observed within an interval, unordered locations of points are independent and distributed uniformly over the interval \citep[Theorem 4A]{parzen1999stochastic}. Thus, within each subinterval $B_j,\; j=1,\dots,J$ one can generate $N_i(B_j)$ realizations of a uniform random variable over $B_j$, say $S_{(i,j)} = \{S_{i,j,1},\dots,S_{i,j,N_i(B_j)}\}, j=1,\dots,J$ and form $\Phi_i = S_{(i,1)}\cup S_{(i,2)}\cup \dots\cup S_{(i,J)}$. This type of data is referred as histogram data \citep{streit2010poisson}.

\subsection{Prediction}\label{subsec:Asymptotic_P3ls}
Suppose $\Phi_0$ is a new realization of the log-Gaussian Cox process described in Section \ref{sec:Methodology} and let $X_0$ be its log-intensity function. It is of practical interest to predict the response value $Y_0$ corresponding to the new observation $\Phi_0$. For instance, in our motivating study, an unexperienced radiologist might be interested in an export`s rating of the kidney`s degree of obstruction. By plugging in the estimated coefficient function $\hat{b}(t)$ in the linear model (\ref{eq:Regression_model}), we can predict the response by
\begin{equation}\label{eq:predicted_value}
   \hat{Y}_0 = \bar{Y} + \int_{\mathcal{I}}\left[X_0(t)-\bar{X}(t)\right]\hat{b}_p(t)\;dt.
\end{equation}

Theorem \ref{thm:predictive_consistency} shows the predictive consistency of the predictor (\ref{eq:predicted_value}), by showing that the mean squared prediction error, conditional on the training data, converges to zero as the sample size increases.   

\begin{theorem}\label{thm:predictive_consistency} 
Suppose $(X_0,Y_0)$ is independent of $(X_1,Y_1),\dots,(X_n,Y_n)$ and that $Y_0$ follows model (\ref{eq:Regression_model}). In addition, suppose $B_1,\dots,B_M$, as defined in Section \ref{subsubsec:est_intensity}, are such that $\max_m\{|B|_m\}\leq c_1n^{-1/2}$ and $q$ is large enough such that $\sum_{\ell=q+1}^{\infty}|\xi_{i\ell}|\leq c_2n^{-1/2}$, where $c_1,c_2$ are fixed constants. If $h\to 0$ at a rate to ensure $n^{-1/2}h^{-4}\to 0$ and if we choose $p=p(n)$ to diverge no faster than $n^{1/2}h^{-2}$ and sufficiently slowly to ensure  $n^{-1/2}h^{-4}\lambda^{-1}+n^{-1}h^{-8}\lambda^{-3}\to 0$, as $n\to\infty$, where $\lambda$ is the smallest eigenvalue of the $p\times p$ matrix $H=[h_{jk}]$ with
\[
h_{jk}=\int_\mathcal{I}K^{j+1}(b)(t)K^k(b)(t)\;dt,
\]  
then  
    \begin{equation}
        \left\{\E\left[\left(Y_0-\hat{Y}_0\right)^2 \Bigg| X_1,\dots,X_n\right]\right\}\to 0 \quad\,\mbox{as $n\to\infty$}.
    \end{equation}
   
\end{theorem}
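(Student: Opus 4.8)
The plan is to split the prediction error into an intrinsic noise term, lower-order centering errors, and a dominant term controlled by the discrepancy $b-\hat b_p$ measured in the covariance norm $\|\cdot\|$ on $\mathcal{C}(\mathcal{I})$. Using $a=\E(Y_0)-\int_\mathcal{I}b(t)\E[X_0(t)]\,dt$, I would first write
\begin{equation*}
Y_0-\hat Y_0 = \{\E(Y_0)-\bar Y\} + \int_\mathcal{I}\{X_0(t)-\E[X_0(t)]\}\{b(t)-\hat b_p(t)\}\,dt - \int_\mathcal{I}\{\E[X_0(t)]-\bar X(t)\}\hat b_p(t)\,dt + \epsilon_0 .
\end{equation*}
This isolates the dominant term $\int_\mathcal{I}\{X_0-\E X_0\}\{b-\hat b_p\}$, while $\E(Y_0)-\bar Y$ and $\E X_0-\bar X$ are centering errors of order $n^{-1/2}$.

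\textbf{Reduction to the covariance norm.} Conditioning on $X_1,\dots,X_n$ and integrating over the independent test pair, the mean-zero $\epsilon_0$ decouples and contributes only the irreducible variance $\E(\epsilon_0^2)$. The key observation is that, since $\hat b_p$ is fixed given the training data and $\Cov[X_0(s),X_0(t)]=K(s,t)$,
\begin{equation*}
\E\!\left[\left(\int_\mathcal{I}\{X_0(t)-\E X_0(t)\}\{b(t)-\hat b_p(t)\}\,dt\right)^2 \Bigg| X_1,\dots,X_n\right] = \|b-\hat b_p\|^2 ,
\end{equation*}
that is, the abstract covariance-weighted norm is exactly the out-of-sample prediction seminorm. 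After bounding the centering cross-terms by Cauchy--Schwarz (using $\bar Y-\E Y_0=O_P(n^{-1/2})$ and $\bar X-\E X_0=O_P(n^{-1/2})$), the conditional risk equals $\E(\epsilon_0^2)+\|b-\hat b_p\|^2$ plus terms that are $o_P(1)$, so predictive consistency reduces to showing $\|b-\hat b_p\|\to0$ in probability.

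\textbf{Bias--variance split.} I would then use $\|b-\hat b_p\|\le\|b-b_p\|+\|b_p-\hat b_p\|$, where $b_p$ is the population PLS truncation built from $K(b),\dots,K^p(b)$. The approximation term $\|b-b_p\|\to0$ as $p\to\infty$ follows from the functional-PLS convergence theory of \cite{Delaigle_Hall_2012}, since $b_p$ is the $\|\cdot\|$-orthogonal projection of $b$ onto the order-$p$ Krylov subspace and these subspaces exhaust the relevant space. The remaining estimation term $\|b_p-\hat b_p\|$ is where the point-process machinery enters.

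\textbf{The estimation term and the main obstacle.} Controlling $\|b_p-\hat b_p\|$ requires propagating three layers of error through the Krylov/Gram--Schmidt construction and the final least-squares solve: (i) the covariance error $\hat K-K$ from the kernel second-order-intensity estimator (\ref{eq:K_hat}), whose contribution after $j$ operator powers is controlled by $n^{-1/2}h^{-4}\to0$; (ii) the log-intensity error $\hat X_i^{(q)}-X_i$ from the binned log-linear fit (\ref{eq:log-linear_model}), kept negligible by the bin-width condition $\max_m|B_m|\le c_1 n^{-1/2}$ and the truncation condition $\sum_{\ell>q}|\xi_{i\ell}|\le c_2 n^{-1/2}$; and (iii) the stability of $\hat K^j(b)\to K^j(b)$, of the orthonormalized $\hat\psi_j\to\psi_j$, and of the coefficients $\hat\beta_j\to\beta_j$ obtained from (\ref{eq:beta_hat}). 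The hard part is (iii): the errors compound across the $j=1,\dots,p$ successive applications of $\hat K$, and are then amplified when the least-squares step inverts the empirical analogue of $H=[h_{jk}]$, which is the Gram matrix of the Krylov basis $K(b),\dots,K^p(b)$ in the prediction inner product and whose conditioning is governed by its smallest eigenvalue $\lambda$. The rate conditions --- $p$ diverging no faster than $n^{1/2}h^{-2}$ and slowly enough that $n^{-1/2}h^{-4}\lambda^{-1}+n^{-1}h^{-8}\lambda^{-3}\to0$ --- are calibrated precisely so that the per-power estimation error, multiplied by the number of powers and by the $\lambda^{-1}$ (and $\lambda^{-3}$) amplification from inverting $H$, still vanishes. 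Establishing this uniform-in-$j$ (up to $p$) bound on the compounded, amplified error, and verifying that it forces $\|b_p-\hat b_p\|\to0$, is the crux of the argument.
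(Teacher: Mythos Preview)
Your outline matches the paper's strategy closely: both routes decompose the prediction error, reduce to the covariance/prediction norm, split into an approximation term $\|b-b_p\|$ (handled by Delaigle--Hall) and an estimation term, and control the latter by propagating kernel, log-intensity, and Krylov errors while tracking the amplification from inverting $H$. The one substantive implementation difference is that the paper never tracks the orthonormalized $\hat\psi_j\to\psi_j$ or the $\hat\beta_j$ through Gram--Schmidt as your item~(iii) suggests; instead it works entirely in the raw Krylov coordinates, writing $b_p=\sum_j\gamma_jK^j(b)$ with $\gamma=H^{-1}\alpha$ and $\alpha_j=\int K(b)K^j(b)$, and bounds $\hat g_p-g_p$ directly via $\hat\gamma-\gamma=\hat H^{-1}\hat\alpha-H^{-1}\alpha$. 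This sidesteps Gram--Schmidt error analysis altogether and is what makes the $\lambda^{-1}$ and $\lambda^{-3}$ factors appear cleanly from the Neumann expansion of $\hat H^{-1}$; your approach would work but is more laborious, and since the two bases span the same space the final bound is the same.

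One point to flag: your decomposition correctly produces the irreducible term $\E(\epsilon_0^2)$, which does not vanish; the theorem as literally stated therefore cannot hold unless $\epsilon_0\equiv0$, and what the paper's proof actually establishes (via its Proposition on $\|\hat g_p(X_0)-g_p(X_0)\|_{\mathrm{pred}}$ and the reference to (5.11) of Delaigle--Hall) is that the \emph{excess} risk $\|\hat g_p(X_0)-g(X_0)\|_{\mathrm{pred}}\to0$. You implicitly assume this reinterpretation when you say ``predictive consistency reduces to $\|b-\hat b_p\|\to0$''; make that explicit.
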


Details of the proof are given in the Supplementary Materials. 

\section{Simulation}\label{sec:Simulation}
In this section we illustrate the performance of the $P^3LS$ algorithm in estimation of the coefficient function as well as in prediction. We compare the proposed method with an alternative functional regression method, as follows. First, we estimate the log-intensities by applying the kernel smoothing method \citep{diggle1985kernel}, i.e. given a point process $\Phi_i$, the intensity function of the process is denoted by $\tilde{\lambda}_i(t)$ and estimated as
\begin{equation} \label{eq:intensity_kernel_est}
    \tilde{\lambda}_i(t) = \sum_{x\in\Phi_i} \frac{\kappa_h(x-t)}{a(t;h)},
\end{equation}
where $\kappa(.)$ is a kernel function, $\kappa_h(u)=\kappa(u/h)/h$, and $a(s;h)=\int\kappa_h(s-x)dx$ is an edge correction term. Then, we apply the functional principal component regression to predict the response. In our numerical investigations, we refer to the alternative method as $FPCR$.

\subsection{Simulation Setup} \label{subsec:Simulation_Setup}

In the simulation studies, we generate 200 realizations of the following random log-intensity functions defined on $\mathcal{I}=[0,24]$.
\begin{equation} \label{eq:simulation_intensity_function}
\log(\lambda_{nobs}(t)) = \sum_{j=1}^{20}\left(\frac{1}{\eta}\omega_j^{(nobs)}+2.8\right)\phi_j(t),
\end{equation}
where $\eta = 10$, $\phi_1,\dots,\phi_{20}$ are B-spline basis functions, and
\begin{align*}
    \omega^{(nobs)}_1&=0, \omega^{(nobs)}_2,\omega^{(nobs)}_{20} \overset{iid}{\sim} N(12,4^2), \omega_j^{(nobs)}\overset{iid}{\sim} N(20,10^2), j=3,\dots,19.
\end{align*}
In addition, we considered 
\[
b(t) = \sum_{k=j}^{20}\vartheta_j\phi_j(t),
\]
for the following four cases of the coefficients $\vartheta_1,\dots,\vartheta_{20}$;
\begin{itemize}
    \item Case 1: $\vartheta_j = \mathbb{I}\{2\leq j\leq 20\}, j=1,\dots,20$.
    \item Case 2: $\vartheta_j = \mathbb{I}\{2\leq j\leq 10\} - \mathbb{I}\{11\leq j\leq 20\}, j=1,\dots,20$.
    \item Case 3: $\vartheta_j = \mathbb{I}\{2\leq j\leq 6\} - \mathbb{I}\{7\leq j\leq 12\} + \mathbb{I}\{13\leq j\leq 20\}, j=1,\dots,20$.
    \item Case 4: $\vartheta_j = -\mathbb{I}\{2\leq j\leq 5\} +  \mathbb{I}\{6\leq j\leq 10\} - \mathbb{I}\{11\leq j\leq 15\} +\mathbb{I}\{16\leq j\leq 20\} , j=1,\dots,20$.
\end{itemize}
Then, $y_1,\dots,y_{200}$ are generated according to the model (\ref{eq:Regression_model}) with $a=0$ and $\epsilon_i\overset{iid}{\sim}N(0,1)$. 
In construction of the coefficient functions we considere examples of a functional relationship where, in Case 1, the response is highly correlated with overall integrated log-intensity ( or equivalently with photon counts) over the experiment`s time; in Case 2, 3 and 4, the response is highly correlated with a contrast in the log-intensity over two, three, and four periods of time. Plots of the coefficient functions considered in Cases 1-4 are illustrated in Figure \ref{fig:Case1-3_Coeff_func}. 

\begin{figure} 
\begin{center}
\begin{tabular}{c}
\includegraphics[width=6.5in, height=3.in]{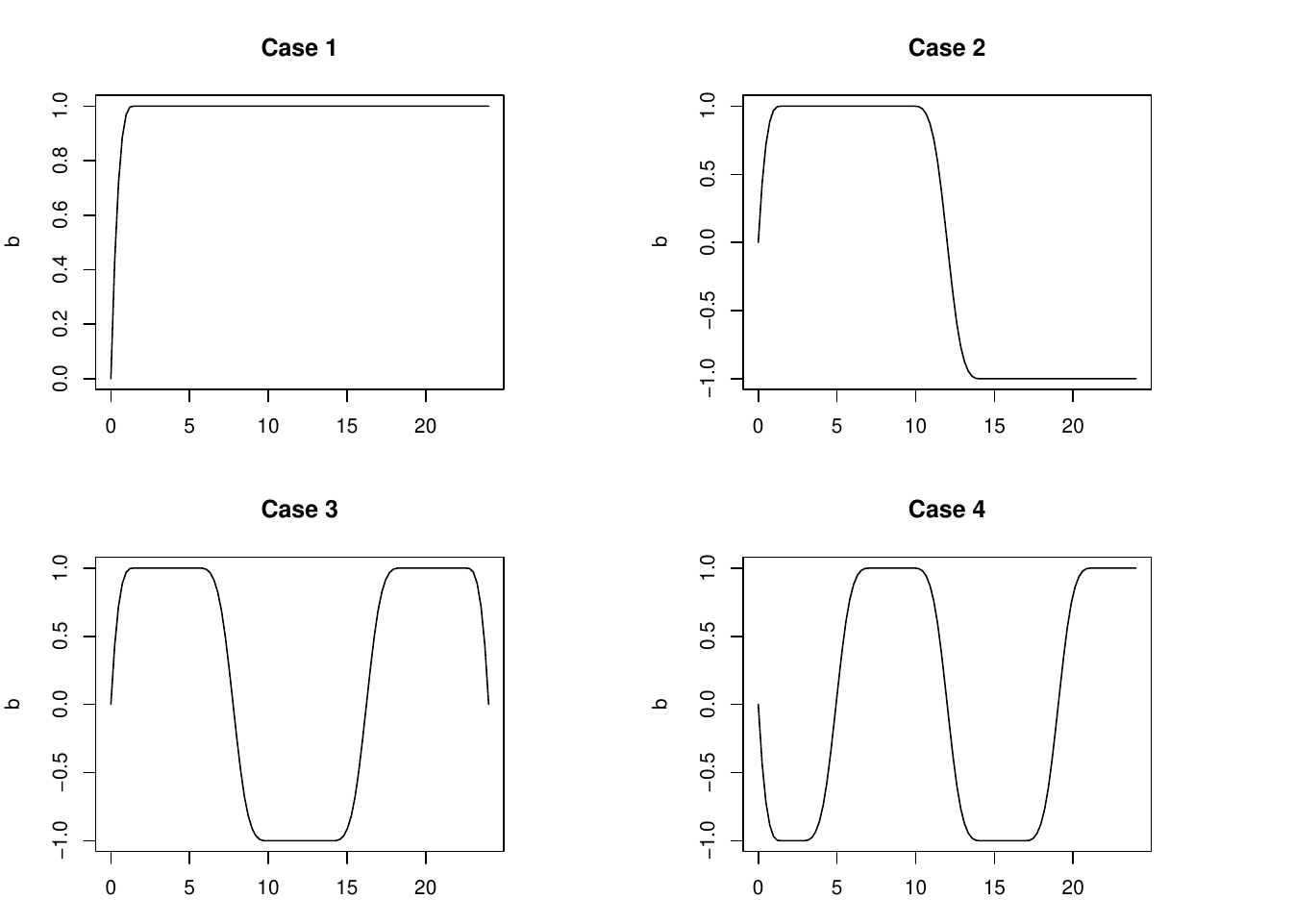}
\end{tabular}
         \caption{Plots of the coefficient functions considered in Case 1 (\textbf{Top Left}), Case 2 (\textbf{Top Right}), Case 3 (\textbf{Bottom Left}), and Case 4 (\textbf{Bottom Right}).}
         \label{fig:Case1-3_Coeff_func}
\end{center}
\end{figure}

To make comparisons, we consider  200 realizations of a temporal Gaussian process generated from the corresponding log-Gaussian Cox process. The randomly selected $n = 100$ samples are used as a training set and another $n_t=100$ samples as a testing set. The mean square estimation error (MSEE) of the coefficient function is defined as
\begin{equation} \label{eq:MSEE}
    MSEE = \int_\mathcal{I}\left[b(t)-b_p(t)\right]^2\;dt,
\end{equation}
and the mean square prediction error ($MSPE$) of the testing responses is defined as
\begin{equation} \label{eq:MSPE}
    MPSE = \frac{1}{n_t}\sum_{j=1}^{n_t}\left(y_j^{(test)}-\hat{y}_j^{(test)}\right)^2,
\end{equation}
where $y_j^{(test)}$ and $\hat{y}_j^{(test)}, j=1,\dots,n_t$ are the response values in the testing set and their predicted values, respectively. The data generation and model fitting procedure are repeated 100 times and boxplots of the root $MSEE$s and root $MSPE$s are computed for $p=1,\dots,10$ basis functions involved in estimation of the coefficient function. Lastly, to apply the $P^3LS$ we choose $q$ large enough such that more than $90\%$ of the total variability is explained by $\phi_1,\dots,\phi_q$, we choose $|B_\ell|=24/100$, and $h=2$.

\subsection{Simulation Results} \label{subsec:Simulation_Results}
Here we display the boxplots of the root $MSPE$ of the estimated model on the testing data set, in Figure \ref{fig:MSPE_sim}, as well as the boxplots of the root $MSEE$  of the coefficient functions, in Figure \ref{fig:MSE_sim}, for $p=1,\dots,10$ basis functions, estimated by the $P^3LS$ and {\it FPCR} methods.
As Figure \ref{fig:MSPE_sim} illustrates, to achieve the $MSPE$ level that $P^3LS$ achieves with $p=1$ basis function, the method based on functional principal component regression requires at least $p=4$ basis functions, in addition that the prediction is more stable. Similar behavior is observed in the estimation of the coefficient functions, as Figure \ref{fig:MSE_sim} illustrates. Finally, we investigate the sensitivity of the prediction performance on the choice of $h$ through cross validation, detailed in the Supplementary Materials. Results indicate that the predictive performance of the $P^3LS$ is not sensitive to the choice of $h$.

\begin{figure}[t] 
\begin{center}
\vspace{-.7in}
\begin{tabular}{cc}
\includegraphics[width=2.9in, height=3.4in, bb = 50 50 650 650]{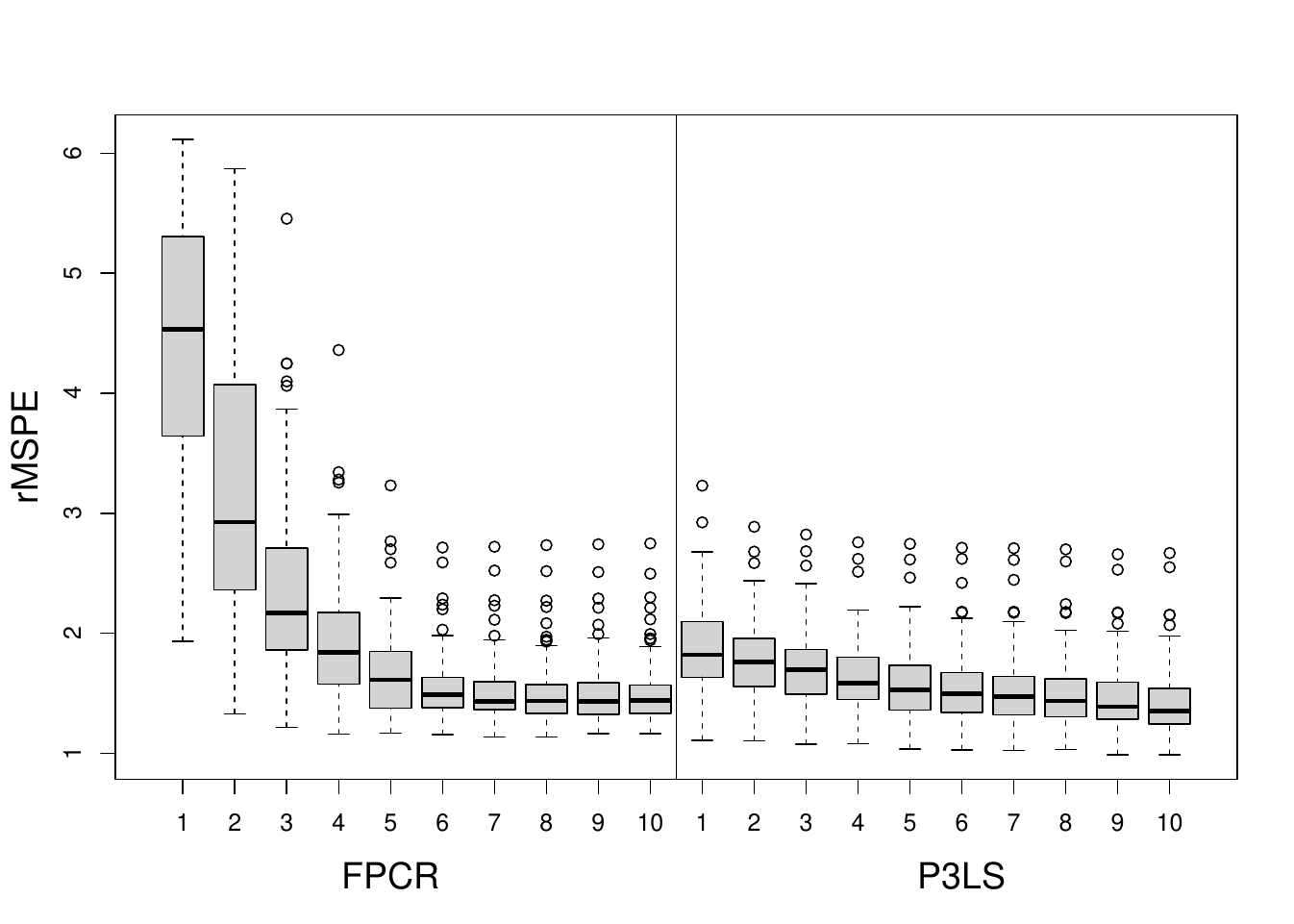} 
& \includegraphics[width=2.9in, height=3.4in, bb = 50 50 650 650]{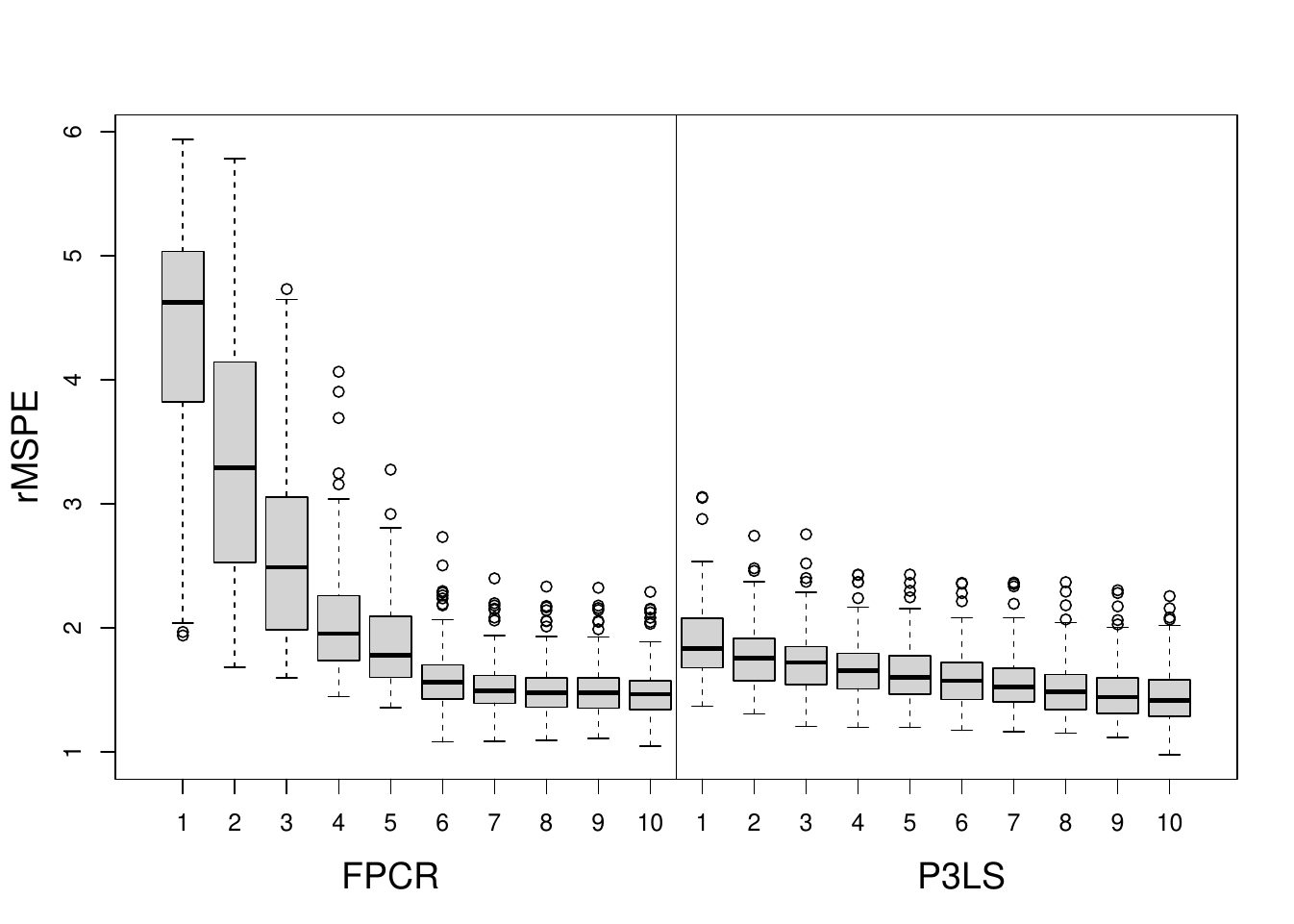}\vspace{-67pt} \\
\includegraphics[width=2.9in, height=3.4in, bb = 50 50 650 650]{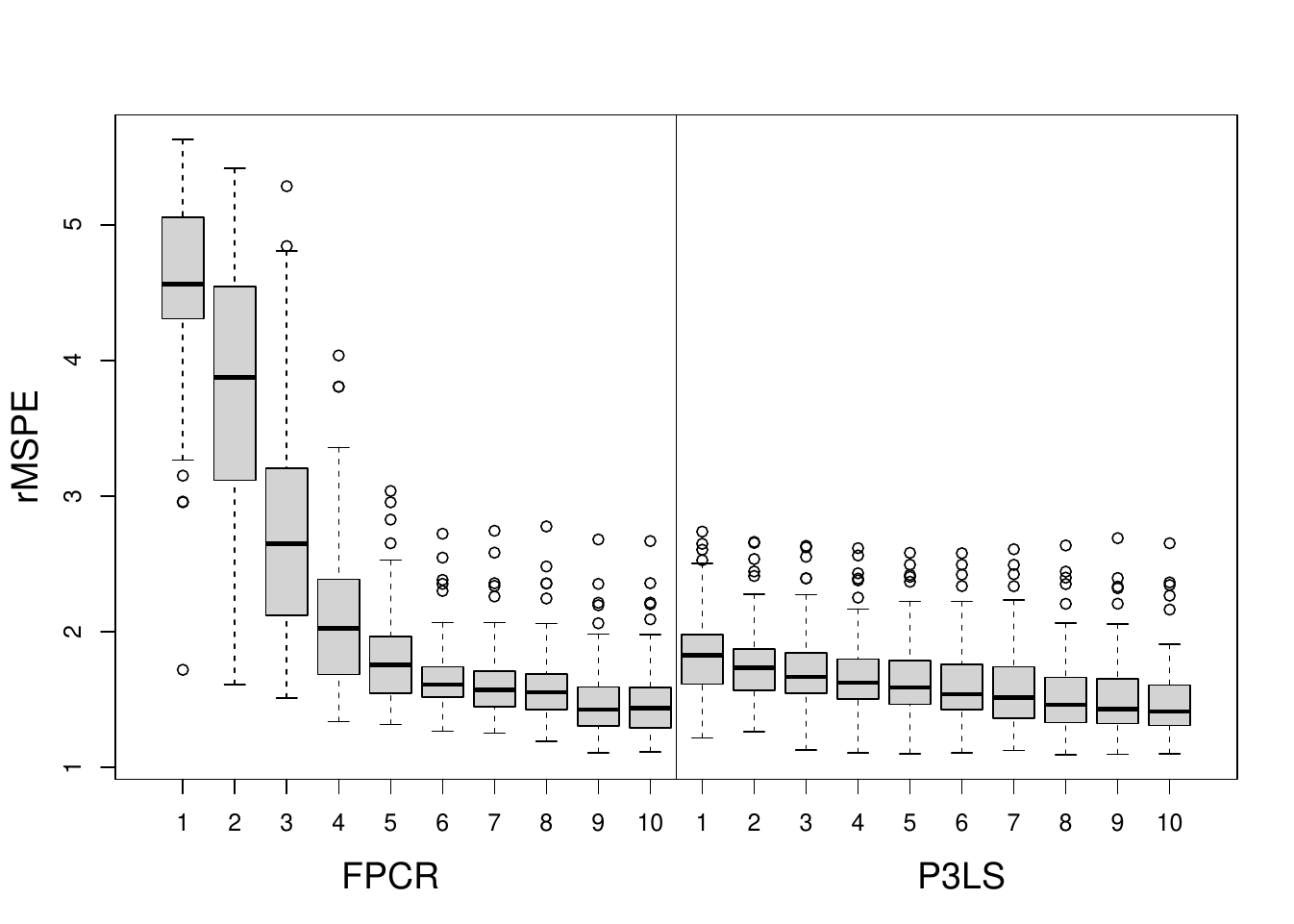}
& \includegraphics[width=2.9in, height=3.4in, bb = 50 50 650 650]{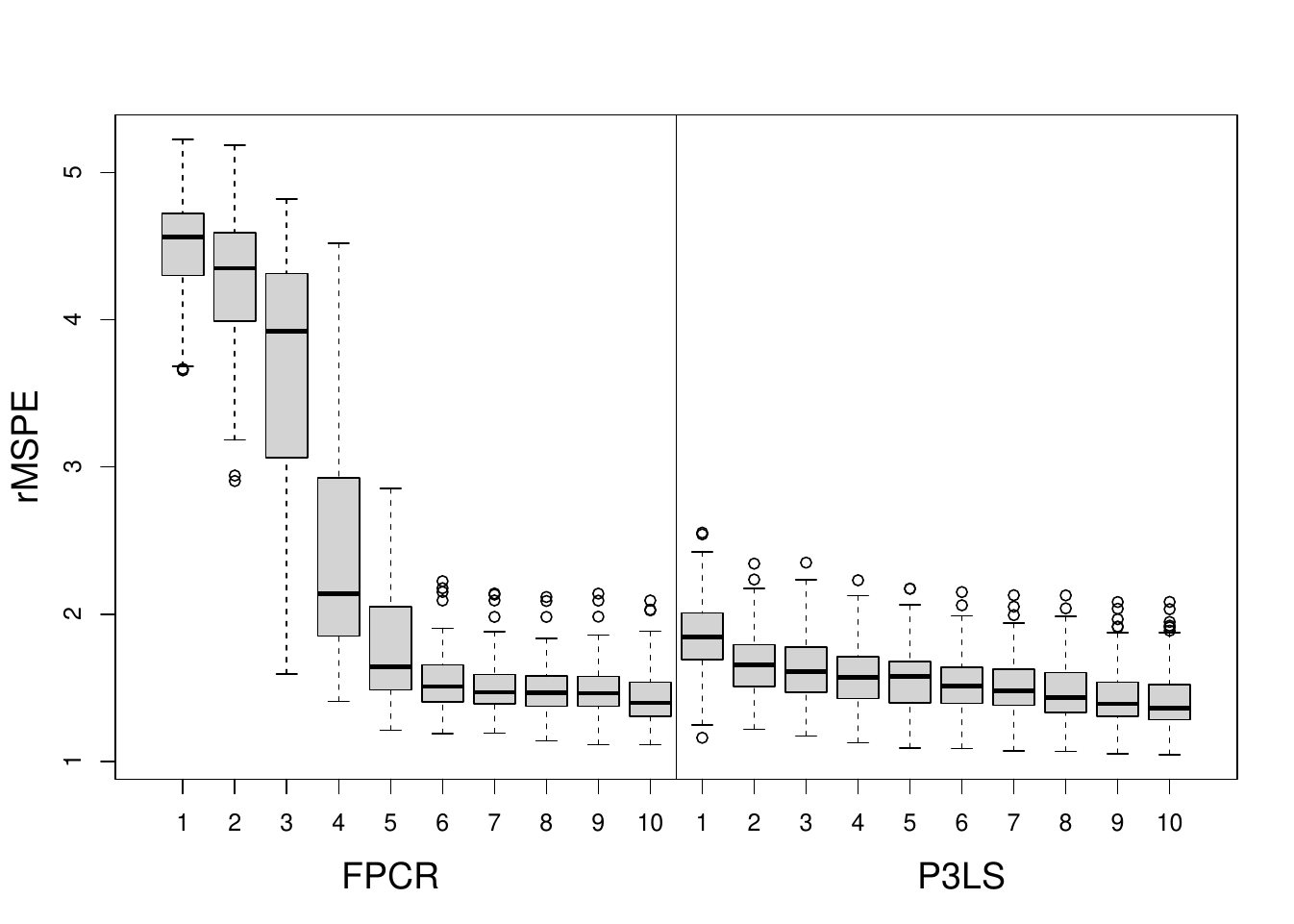}
\end{tabular}
         \caption{\textbf{Top Left:} Root $MSPE$ of Case 1, \textbf{Top Right:} Root $MSPE$ of Case 2, \textbf{Bottom Left:} Root $MSPE$ of Case 3, \textbf{Bottom Right:} Root $MSPE$ of Case 4.}
         \label{fig:MSPE_sim}
\end{center}
\end{figure}

\begin{figure}[t] 
\begin{center}
\vspace{-.7in}
\begin{tabular}{cc}
\includegraphics[width=2.9in, height=3.4in, bb = 50 50 650 650]{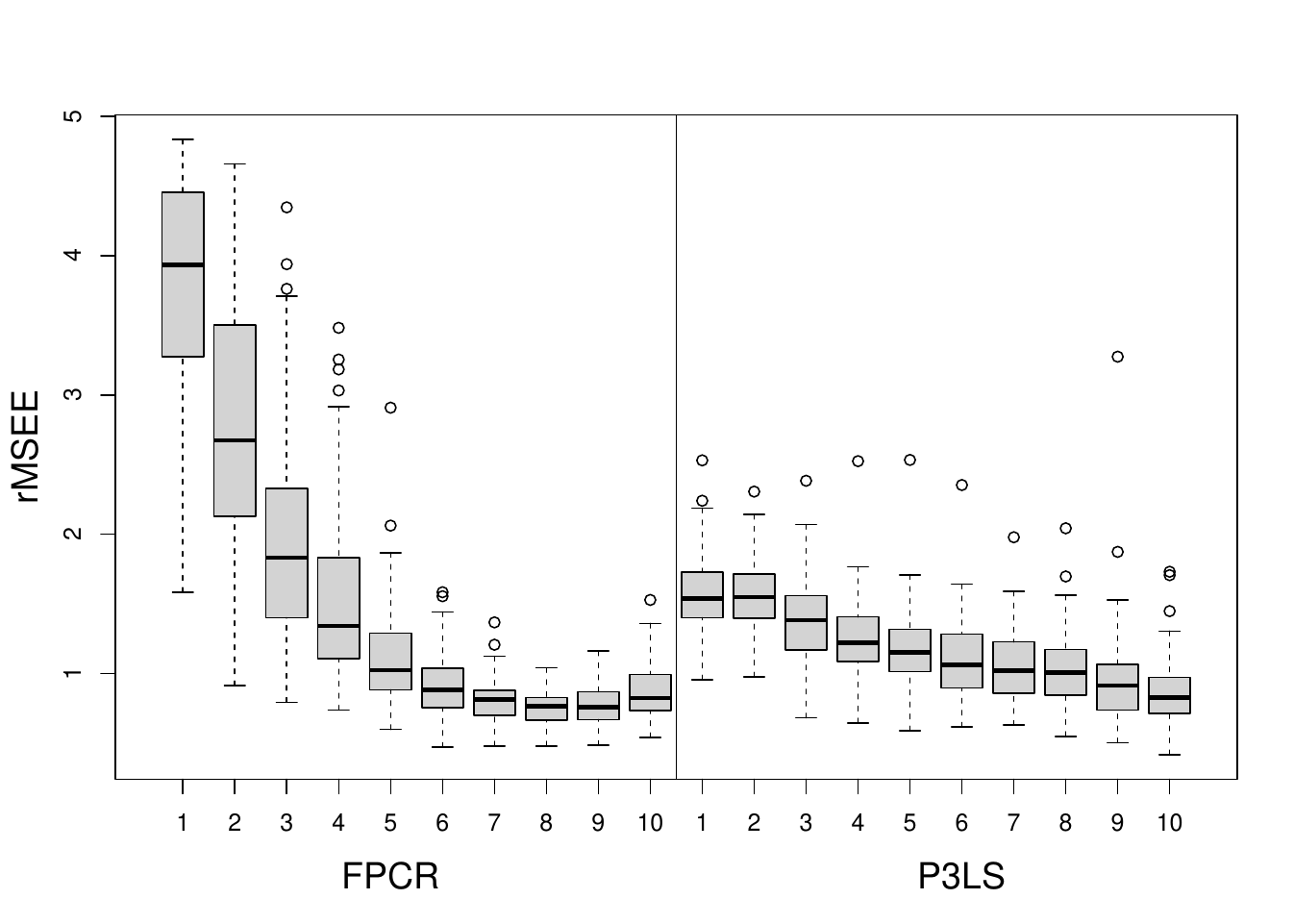} 
& \includegraphics[width=2.9in, height=3.4in, bb = 50 50 650 650]{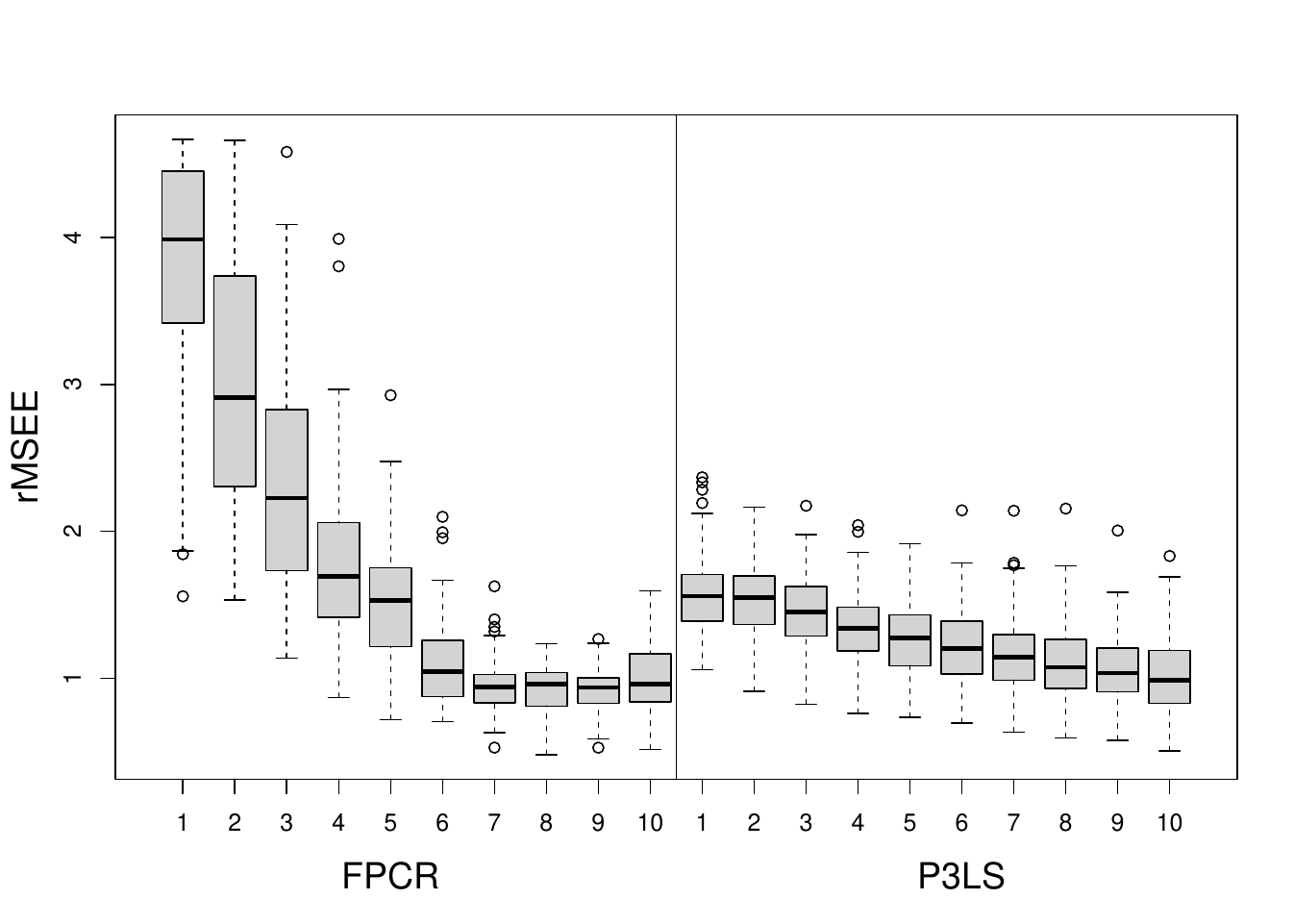}\vspace{-67pt} \\
\includegraphics[width=2.9in, height=3.4in, bb = 50 50 650 650]{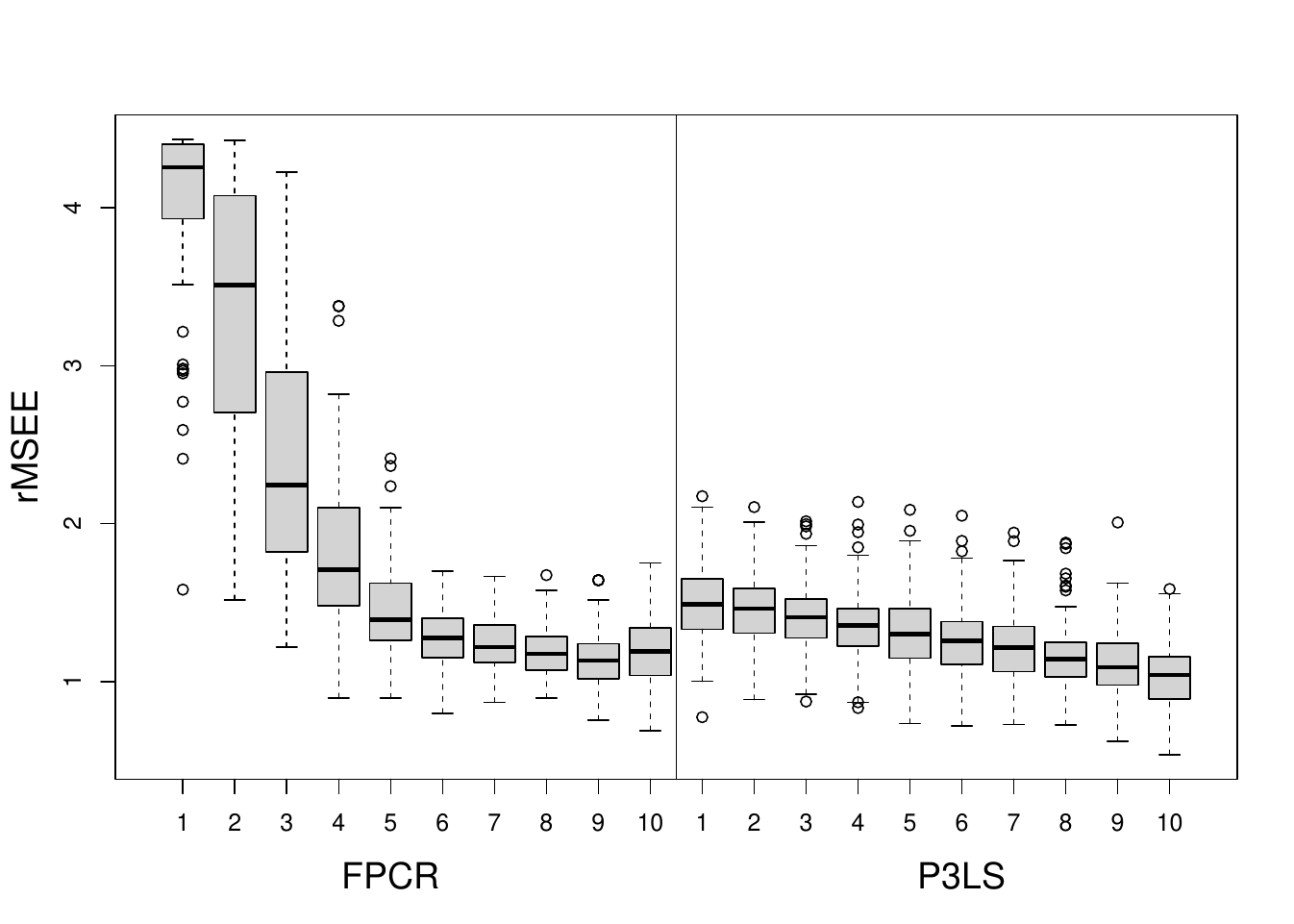}
& \includegraphics[width=2.9in, height=3.4in, bb = 50 50 650 650]{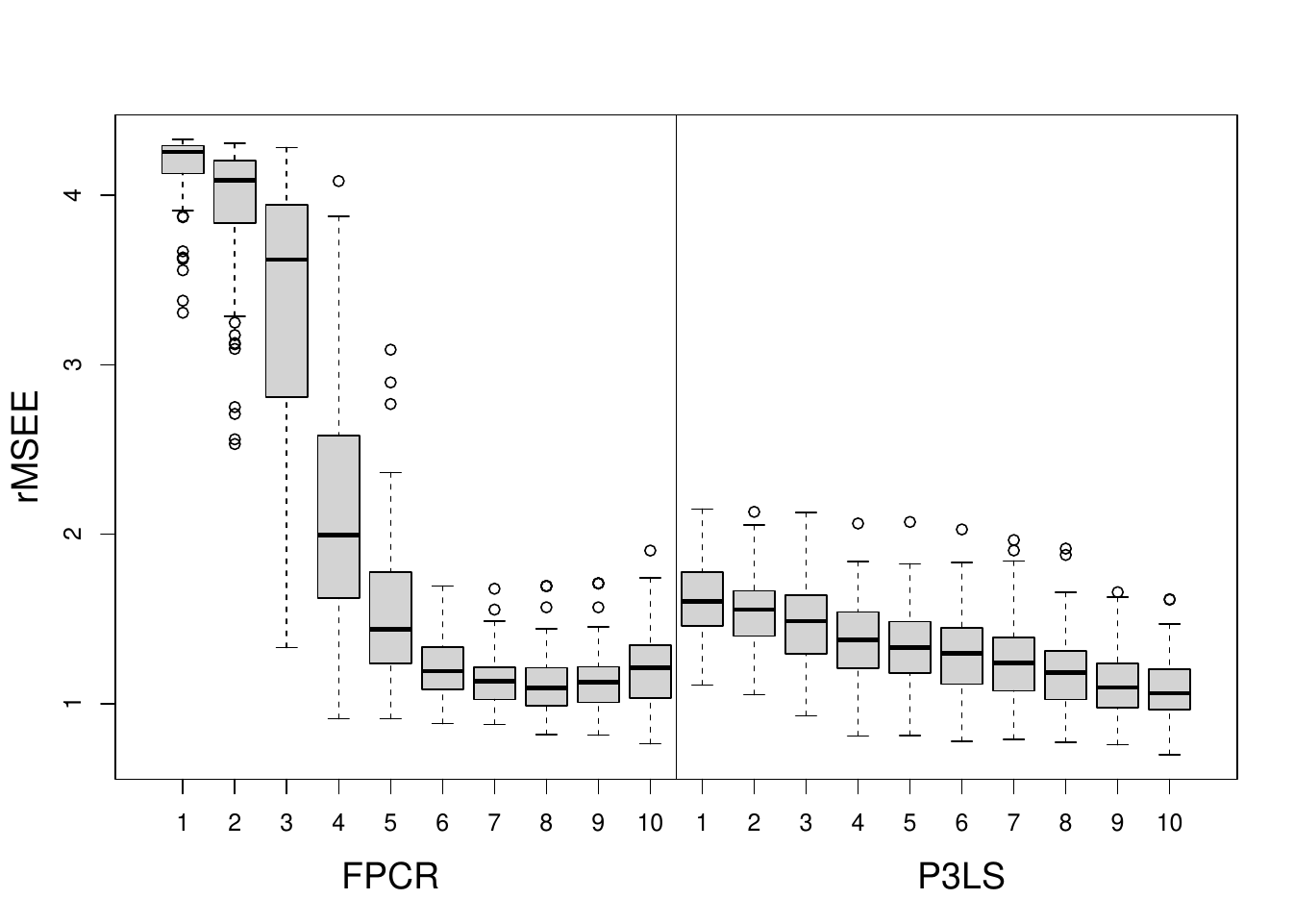}
\end{tabular}
         \caption{\textbf{Top Left:} Root $MSEE$ of Case 1, \textbf{Top Right:} Root $MSEE$ of Case 2, \textbf{Bottom Left:} Root $MSEE$ of Case 3, \textbf{Bottom Right:} Root $MSEE$ of Case 4.}
         \label{fig:MSE_sim}
\end{center}
\end{figure}


\section{Radionuclide Imaging  Study} \label{sec:Data_Analysis}
We consider a renal radionuclide imaging  study conducted at  Emory University as part of an effort in developing analytical tools that help radiologists with the interpretation  of kidney obstruction. Kidney obstruction refers to the condition in which there is a blockage in the ureter that can have a degenerative impact on kidney function and can result in kidney failure, if not treated in a timely manner. A widely used approach for evaluating suspected obstruction is radionuclide imaging. Imaging is performed following an intravenous injection of the gamma emitting tracer technetium-99m mercaptoacetyltriglycine (Tc-99mMAG3), with additional imaging following the subsequent intravenous administration of a potent diuretic \citep{o1996consensus}. Lack of opportunity and insufficient training, however, can result in scan interpretations by less experienced  radiologists that disagree considerably with each other and  disagree with the experts' interpretations \citep{jaksic2005, Taylor216, TAYLOR201241}. This highlights the need for analytic tools to support the interpretation of kidney obstruction. Such tools could improve patient care by reducing both intra- and inter-observer variability in MAG3 scan evaluations and by enhancing the training of radiology residents with limited experience.


The data considered in this paper were collected from $N=131$ patients, during the period of March 1998 to July 2017 who were referred to the clinic
with suspected kidney obstruction. Scans of both kidneys were available for $122$ of them and renogram data of only one kidney was available for $9$ subjects, with data available for $127$ left and $126$ right kidneys. Of those 131 subjects, 66 were female and 65 were male with median age of 59 where 75\% of them were between 48 to 70 years old.

Each subject underwent two scans, the first scan called ``Baseline" and a second scan following an injection of furosemide, a diuretic, ``Diuretic". The Baseline scan was performed following the intravenous injection of MAG3, which is rapidly removed from the blood by the kidneys and then travels down the ureters from the kidney to the bladder. Photons emitted by the tracer were
imaged by a gamma camera/computer system and quantified
for analysis by placing a region of interest (ROI) over each
kidney. The point process of photon arrival times for the baseline scan was obtained from the whole-kidney ROIs over a 24-minute acquisition. Following intravenous administration of furosemide, a second (diuretic) scan was performed, with data collected for an additional 20 minutes, yielding both baseline and diuretic renograms.
{In this study, there is no established gold standard for assessing kidney obstruction.} Therefore, an expert with extensive knowledge of kidney function and over 25 years of experience in academic nuclear medicine was asked to interpret each kidney's condition on a scale from -1 to 1, where values approaching 1 indicate a high degree of obstruction. 

In our analysis, we used data from 100 kidneys as training set and evaluated the trained model using the remaining data. 
The proposed method described in Section \ref{sec:Methodology} was applied separately to data from the left and right kidneys, with the baseline and diuretic scans concatenated to jointly incorporate information from both processes.

To fit the predictive model (\ref{eq:Regression_model}), we first evaluated several candidate basis functions and, guided by the Bayesian Information Criterion, selected two. We then estimated both the basis functions and the coefficient function in (\ref{eq:Regression_model}) using the training dataset.
The top panel of Figure (\ref{fig:Coeff_Basis_MSPE}) illustrates the estimated coefficient function of the predictive model (\ref{eq:Regression_model}) for both the left and right kidneys. For both kidneys, the estimated coefficient function is negative across all times for the baseline renogram and positive across all times for the diuretic renogram. It represents a contrast between diuretic and baseline renograms such that larger increases in diuretic log-intensities relative to baseline are associated with higher expected expert scores. This association 
is consistent with expert clinical knowledge, where larger values of the expert score are associated with obstruction and where the renogram  of an obstructed kidney tends to increase during the baseline scan and stay at the same or higher level during the diuretic scan.

\begin{figure}[h] 
\begin{center}
\begin{tabular}{cc}
\vspace{-50pt}

\includegraphics[width=2.7in, height=2.7in, bb = 50 50 600 600]{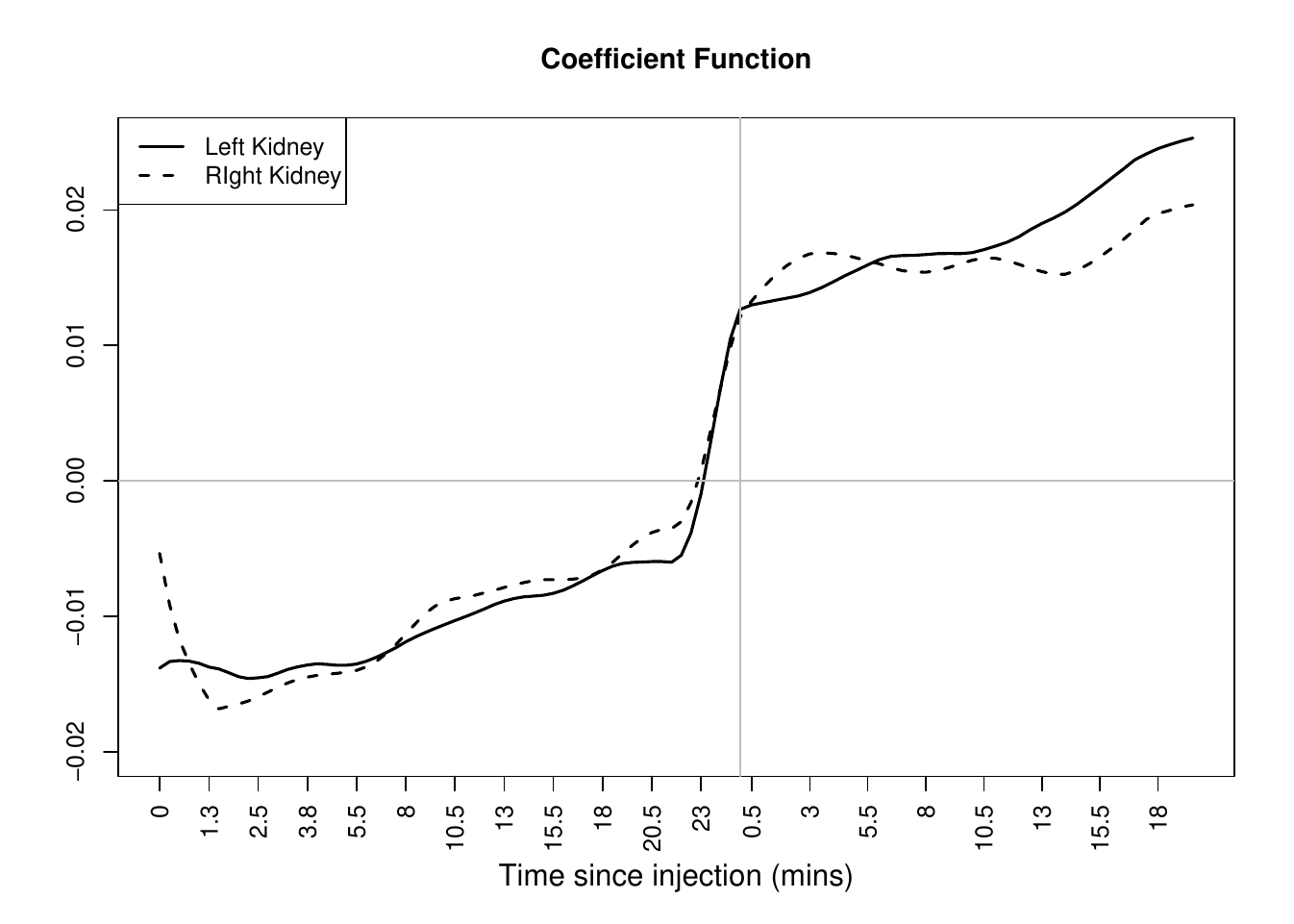}& \\
\includegraphics[width=2.7in, height=2.7in, bb = 50 50 600 600]{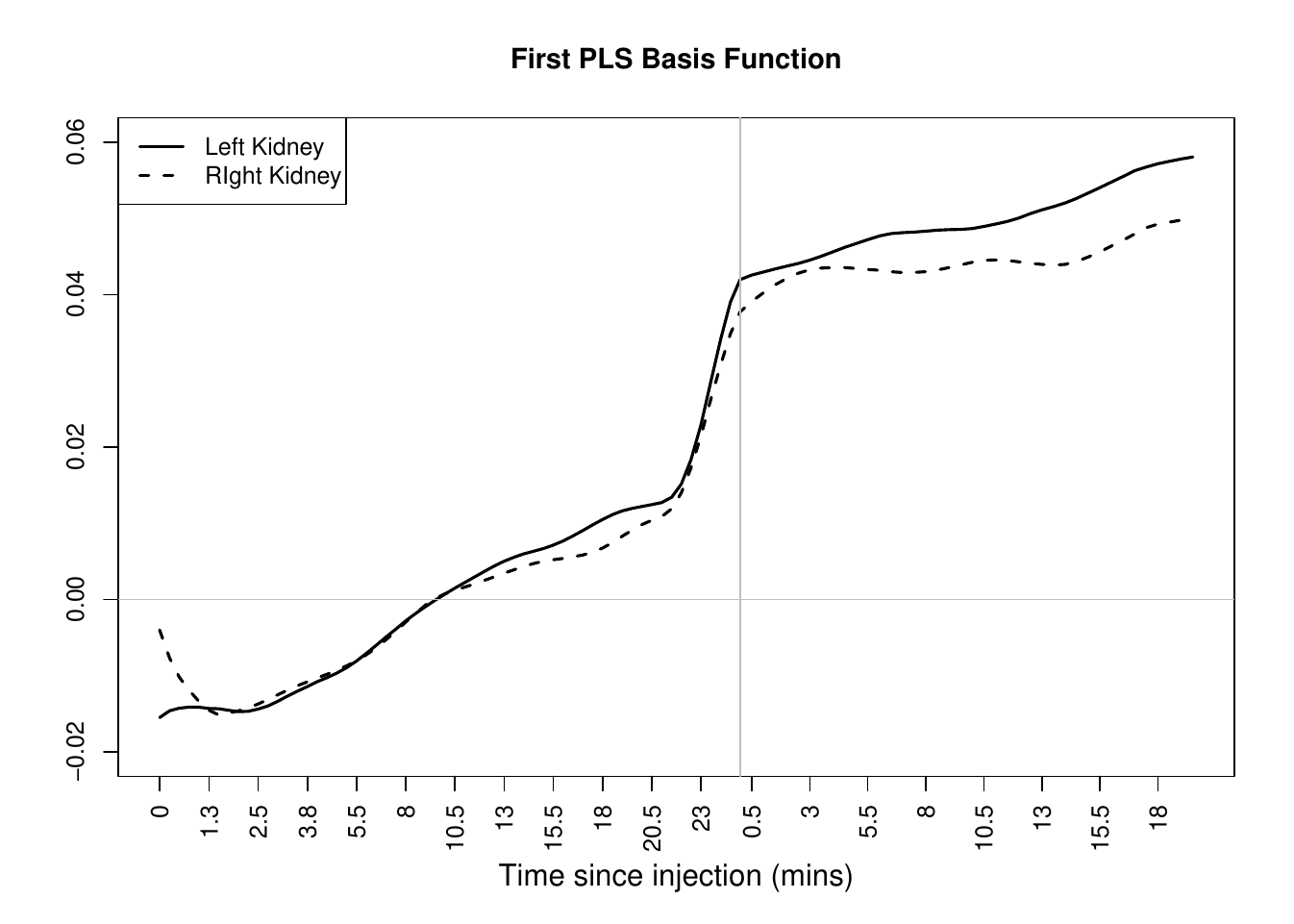}
& \hspace{10pt}
\includegraphics[width=2.7in, height=2.7in, bb = 50 50 600 600]{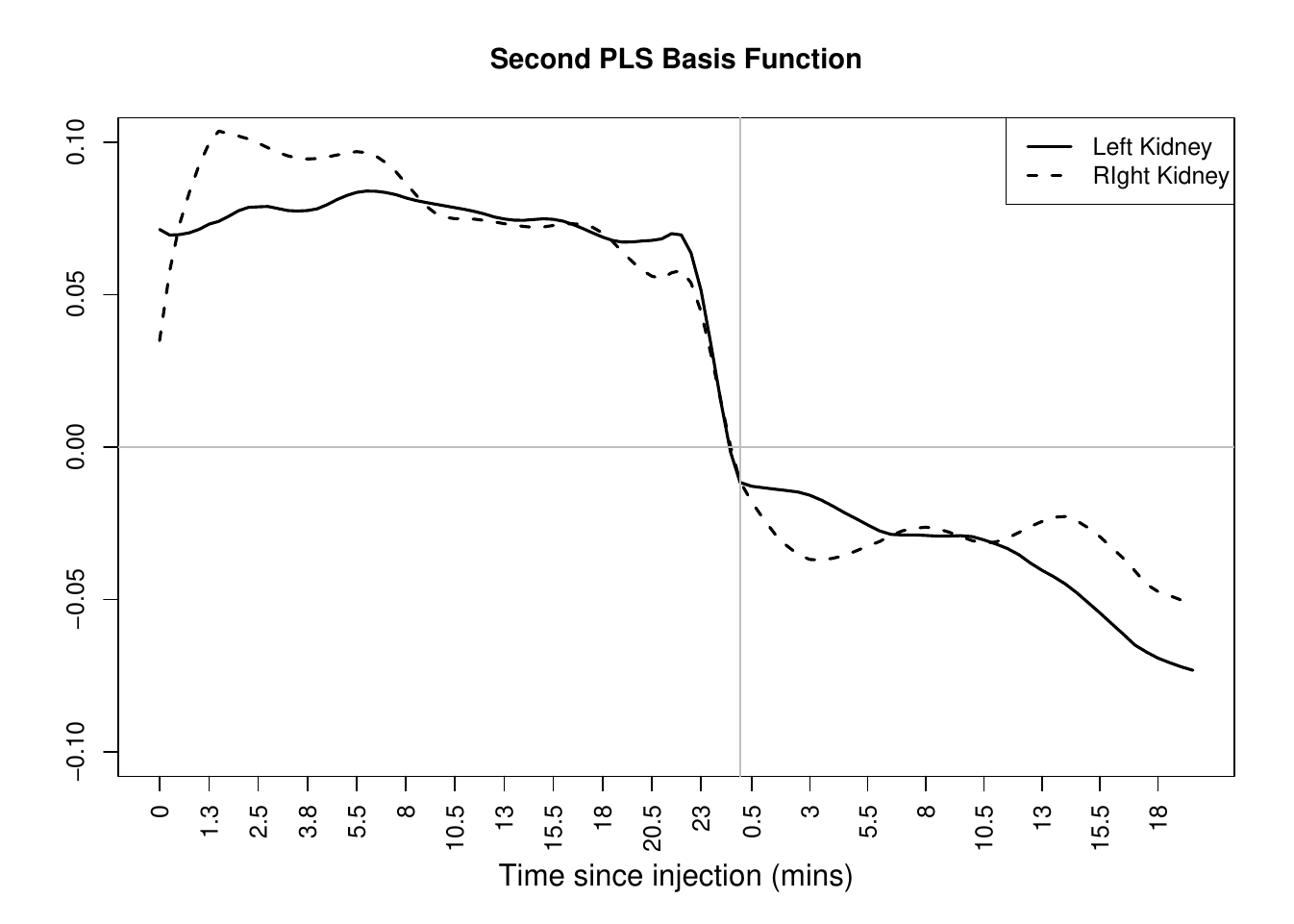} 
\end{tabular} 
\vspace{20pt}
         \caption{\textbf{Top Left panel} The coefficient functions estimated for the left (solid line) and right (dashed line) kidneys. \textbf{Bottom Left panel} The first PLS basis function of the left (solid line) and right (dashed line) kidney; \textbf{Bottom right pane}: The second PLS basis function of the left (solid line) and right (dashed line) kidney.}
    \label{fig:Coeff_Basis_MSPE}
\end{center}
\end{figure}

The estimated two basis functions, which are displayed in the middle panel of Figure \ref{fig:Coeff_Basis_MSPE}, also represent contrasts and provide insight into parsimonious temporal renogram information that is predictive of obstruction.  The first basis function is negative for early baseline times before 9 minutes, and positive for both late baseline times after 9 minutes as well as for all diuretic scans. 
The second basis function is positive at all baseline time points and negative at all diuretic time points. These findings illustrate one of the main benefits of $P^3LS$ compared to approaches such as the two-stage FPCA: it provides parsimonious, potentially clinically interpretable, one-dimensional measures of optimal association between variability in log-intensity with clinical outcomes. In our study, we found that the speed of uptake in intensity at the beginning of baseline relative to intensity during the rest of the study (first basis function) and intensity of the diuretic renogram relative to baseline (second basis function), which are clinically meaningful insights, are most predictive of obstruction score.    

Software to perform our method and examples are included to demonstrate its utility. See Web Appendix.


\section{Discussion} \label{sec:Discussion}
This article introduced, to the best of our knowledge, the first extension of partial least squares to point process data. We have explored its performance in comparison with an intuitive approach to linear prediction with point process covariates in various settings of practical interest. The advantages of the method over  alternative approaches to linear prediction with log-Gaussian Cox processes as covariates stem from incorporation of the properties of the process in estimation of the covariance function as well as the log-intensity functions. The incorporation of this information provides a more efficient dimension reduction compared to the use of prespecified basis functions, accounting for more variability with few basis functions. This parsimonious set of basis functions also has the benefit of providing a collection of potentially scientifically interpretable one-dimensional measures of the functional process that are most predictive of the outcome.

The method is not exhaustive and can be extended to more complicated scenarios. In our motivating study, we analyzed the left and right kidneys separately, despite the fact that the data contain scans of both kidneys for 122 of the patients. An optimal analysis of the data needs to account for the dependence within and between the levels in a multilevel data, where an extension of the PLS is needed. The second extension focuses on the incorporation of space-time point processes. In many medical image studies in nuclear medicine, two-dimensional images are produced using gamma camera for each subject.
Extension of the method to higher-dimensional point processes can potentially be of interest. Finally, scalar covariates are also present in addition to the point process covariates in some situations. One possible approach to incorporate scalar covariates in the predictive model is a two-step procedure of first computing the PLS basis functions using the residuals after removing the linear effect of the scalar covariates. Second,  estimate the coefficients of the scalar covariates as well as the coefficient function of the linear model for point process using the computed PLS basis functions. We leave the extensions of the method to incorporate scalar covariates as well as multilevel and multi-dimensional point processes for future investigations. 

\section*{Acknowledgment}
We thank Dr. Andrew Taylor at the department of nuclear medicine, Emory university, for informative discussions with the renal study. This work is supported by National Institutes of Health grants R01GM140476, R01HL159213 and R01MH125816.

\bibliographystyle{asa}

\bibliography{P3LS}

\newpage

\begin{appendices}

\begin{center}
{\large\bf Supplementary Materials for $\mbf{P^3LS}$: Point Process Partial Least Squares}
\end{center}
\appendix

\begin{abstract}
This document provides appendices to supplement the information presented in the main manuscript. In \ref{appendix:Gram-Schmidt} the modified Gram-Schmidt Algorithm is provided. \ref{app:Asymptotic_analysis} provides the details of the proof of Theorem 1 of the main manuscript. Additional results and illustrations from the simulation studies are presented in  \ref{appendix:extra_simulation}.
\end{abstract}

\section{Modified Gram-Schmidt Algorithm}\label{appendix:Gram-Schmidt}
This algorithm constructs a set of unit length orthogonal basis functions $u_1,\dots,u_p$ from a set of linearly independent functions $v_1,\dots,v_p$, where orthogonality is defined with respect to the inner product $<.,.>$. For two functions $f_1$ and $f_2$, the inner product is defined as
\begin{equation}
    <f_1,f_2> = \int_\mathcal{I}\int_\mathcal{I}f_1(s)\hat{K}(s,t)f_2(t)\;ds\;dt.
\end{equation}
The modified Gram-Schmidt algorithm is as follows: \\

\begin{algorithm}[H]
\caption{Modified Gram-Schmidt} 

\SetKwInput{KwInput}{Input}                
\SetKwInput{KwOutput}{Output}              
\DontPrintSemicolon
\SetAlgoLined
\RestyleAlgo{boxruled}
  
  \KwInput{Set of linearly independent functions $v_1,\dots,v_p$}
  \KwOutput{Set of orthogonal functions $u_1,\dots,u_p$}

\BlankLine
    \For{$j\in\{1,\dots,p\}$}{
    $u_j^{[1]} = v_j$\;
    For{$i=1,\dots,j-1$}{
    $u_j^{[i+1]} = u_j^{[i]} - <u_j^{[i]},u_i,>u_i$
    }\;
    $u_j=u_j^{[j]}/\|u_j^{[j]}\|$
    }
  \KwOutput{$u_1,\dots,u_p$}
\end{algorithm}
\BlankLine

\section{Asymptotic Analysis of $P^3LS$}
\label{app:Asymptotic_analysis}
To study the asymptotic properties of the $P^3LS$ we follow the analysis detailed in \cite{Delaigle_Hall_2012} and remind that the  the space spanned by $\psi_1,\dots,\psi_p$, for each $p\geq 1$, is the same as the space spanned by $K(b),\dots, K^p(b)$. This property of partial least squares enables us to study its asymptotic properties by employing function expansion in the $K(b),\dots,K^p(b)$.
   
Let $X(t), X_i(t)\in\mathcal{C}(\mathcal{I}), i=1,\dots,n$, be i.i.d Lipchitz random functions with Lipchitz constants at most $c_0$ such that $\int_{\mathcal{I}}\E[X_i^2(t)]\;dt < \infty$ and $\E[\|X_i^4\|]<\infty$. In addition, lets assume that $\int_{\mathcal{I}}b^2(t)\; dt < \infty$ and $\E[\epsilon_i]<\infty$. Recall that, $g(x) = \E[Y|X=x] = a+\int_\mathcal{I}b(t)x(t)\;dt$. Note that, according to Theorem 3.2 in \cite{Delaigle_Hall_2012}, when all eigenvalues of $K(.,.)$ are nonzero, any $b\in\mathcal
{C}(\mathcal{I})$ has the following representation 
\begin{equation} \label{eq:expansion_of_b_in_K^j}
    b(t) = \sum_{j=1}^{\infty}\gamma_jK^j(b)(t),
\end{equation}
where the series converges in $L^2$. We approximated $b(t)$
by its truncated expansion in the PLS basis functions $K(b),\dots,K^p(b)$, denoted as $b_p(t)$ i.e. $b_p(t) = \sum_{j=1}^{p}\gamma_jK^j(b)$. As outlined in \cite{Delaigle_Hall_2012}, $\gamma_1,\dots,\gamma_p$ is defined to be the sequence $w_1,\dots,w_p$ that minimizes 
\begin{align}\label{eq:estimation_Error}
    t_p(w_1,\dots,w_p) = \E\left\{\int(X-\E(X))b-\sum_{j=1}^{p}w_j\int(X-\E(X))K^j(b)\right\}^2,
\end{align}
where, in matrix notation, $\gamma=[\gamma_1,\dots,\gamma_p]^\prime$ can be written as 
\begin{equation}
    \gamma =H^{-1}\alpha
\end{equation}
and $\alpha=[\alpha_1,\dots,\alpha_p]$ and $H=[h_{jk}]_{1\leq j,k\leq p}$ are defined as follows
\begin{align*}
    h_{jk} &= \int K^{j+1}(b)K^k(b), \\
    \alpha_j &= \int K(b) K^j(b) = h_{0,j}.
\end{align*}
This gives rise to approximation of $g(x)$ by $g_p(x)$, where $g_p(x) = a+\int_\mathcal{I}b_p(t)x(t)\;dt = \E[Y]+\sum_{j=1}^{p}\gamma_j\int_{\mathcal{I}}(x-\E[X])K^j(b)(t)\;dt$. Next to estimate $b(t)$, we first estimate $K(s,t)$ by $\hat{K}(s,t)$ and $K(b),\dots,K^j(b)$ by $\hat{K}(b),\dots,\hat{K}^p(b)$, where we used the predicted log-intensities in order to compute $\hat{K}(b)$. Then, we estimate $\gamma$ by $\hat{\gamma}=\hat{H}^{-1}\hat{\alpha}$ where, $\hat{H}=[\hat{h}_{jk}]_{1\leq j,k\leq p}$
\begin{align*}
    \hat{h}_{jk} &= \int \hat{K}^{j+1}(b)\hat{K}^k(b), \\
    \hat{\alpha}_j &= \int \hat{K}(b) \hat{K}^j(b) = \hat{h}_{0,j},
\end{align*}
and estimate $g_p(x)$ by $\hat{g}_p(x)=\bar{Y}+\sum_{j=1}^{p}\hat{\gamma}_j\int_{\mathcal{I}}(x-\E[X])\hat{K}^j(b)(t)\;dt$.

Consider $X_0$, having the same distribution as $X_1,\dots,X_n$ and being independent of them. Let $\|.\|_{\pred}$ denote the predictive norm, conditional on $X_1,\dots,X_n$; i.e. if $W$ is a random variable, then $\|W\|_{\pred} = \{\E(W^2|X_1,\dots,X_n)\}^{1/2}$. The following theorem establishes predictive consistency and the convergence rate of $\hat{g}(X_0)$ to $g(X_0)$ as $n\to\infty$.

\begin{theorem}
    Under conditions of Theorem \ref{thm:predictive_consistency}, if $h\to 0$ at a rate to ensure $n^{-1/2}h^{-4}\to 0$ and if we choose $p=p(n)$ to diverge no faster than $n^{1/2}h^{-2}$, sufficiently slowly to ensure  $n^{-1/2}h^{-4}\lambda^{-1}+n^{-1}h^{-8}\lambda^{-3}\to 0$, as $n\to\infty$, then 
    \begin{align*}
        \|\hat{g}_p(X_0)-g_p(X_0)\|_{\pred} &=  O_p\left\{n^{-1/2}h^{-4}\lambda^{-1}+n^{-1}h^{-8}\lambda^{-3} + 
        t_p(\gamma_1,\dots,\gamma_p)^{1/2}\right\}.
    \end{align*}
\end{theorem}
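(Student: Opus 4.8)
The plan is to follow the operator framework of \cite{Delaigle_Hall_2012}, exploiting that the predictive norm collapses to a $K$-weighted $L^2$ distance between coefficient functions. First I would use the independence of $(X_0,Y_0)$ from the training sample: conditioning on $X_1,\dots,X_n$ freezes $\bar Y$, the coefficients $\hat\gamma_j$, and the estimated directions $\hat K^j(b)$, so that $\hat g_p(X_0)-g_p(X_0)=(\bar Y-\E Y)+\int_\mathcal{I}(X_0-\E X)(\hat b_p-b_p)$ with $\hat b_p=\sum_{j=1}^p\hat\gamma_j\hat K^j(b)$ and $b_p=\sum_{j=1}^p\gamma_j K^j(b)$. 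Taking the conditional expectation over $X_0$, the cross term vanishes because $\E[X_0-\E X]=0$, and the identity $\E\{(\int(X_0-\E X)u)^2\}=\langle u,u\rangle=\|u\|^2$ converts the surviving term into the squared $K$-norm $\|\hat b_p-b_p\|^2$. This reduces the statement to $(\bar Y-\E Y)^2=O_p(n^{-1})$ together with a bound on $\|\hat b_p-b_p\|$, while the truncation term $t_p(\gamma_1,\dots,\gamma_p)^{1/2}=\|b-b_p\|$ is precisely the irreducible PLS approximation bias that enters by the triangle inequality once the fit is compared against the untruncated target, exactly as in \cite{Delaigle_Hall_2012}.

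Second, I would split $\hat b_p-b_p=\sum_{j=1}^p(\hat\gamma_j-\gamma_j)\hat K^j(b)+\sum_{j=1}^p\gamma_j(\hat K^j(b)-K^j(b))$ and treat the basis error and the coefficient error separately in the $K$-norm. The basis error rests on two elementary estimates: the operator error $\|\hat K-K\|$ from the Campbell/second-order-intensity construction of Section~\ref{subsec:Cov_est_PP}, and the first-iterate error $\|\hat K(b)-K(b)\|$, where $\hat K(b)$ is the sample cross-covariance formed from the reconstructed log-intensities $\hat X_i^{(q)}$. The kernel smoothing inflates the variance of each smoothed surface by bandwidth-dependent factors, while the reconstruction bias of $\hat X_i^{(q)}$ is $O(n^{-1/2})$: the binning condition $\max_m|B_m|\le c_1n^{-1/2}$ controls the discretization bias through Lipschitz continuity of $X_i$, and the condition $\sum_{\ell>q}|\xi_{i\ell}|\le c_2n^{-1/2}$ controls the Karhunen--Lo\`eve truncation bias, so both are dominated by the stochastic part. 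Propagating these through the Krylov recursion $\hat K^{j+1}(b)=\int\hat K^j(b)\hat K$ and using boundedness of the operator norms then controls $\|\hat K^j(b)-K^j(b)\|$ uniformly over the retained $j\le p$.

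Third, for the coefficient error I would use the matrix identity $\hat\gamma-\gamma=\hat H^{-1}(\hat\alpha-\alpha)-\hat H^{-1}(\hat H-H)H^{-1}\alpha$, with $\|H^{-1}\|\le\lambda^{-1}$ and $\|\hat H^{-1}\|\le(\lambda-\|\hat H-H\|)^{-1}$ expressed through the smallest eigenvalue $\lambda$ of $H$. Via Cauchy--Schwarz in the $K$-norm, the entrywise errors $\hat h_{jk}-h_{jk}$ and $\hat\alpha_j-\alpha_j$ reduce to the basis-error rates of the previous step; the first-order (linear) perturbation terms then aggregate to the $n^{-1/2}h^{-4}\lambda^{-1}$ contribution and the quadratic term $\hat H^{-1}(\hat H-H)H^{-1}\alpha$ to the second-order $n^{-1}h^{-8}\lambda^{-3}$ contribution, the powers of $h$ tracking the kernel-variance inflation across the iterated and multiplied surfaces and the powers of $\lambda$ tracking the repeated application of $H^{-1}$. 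Collecting both pieces, summing the $p$ terms, and invoking the growth restriction $p=O(n^{1/2}h^{-2})$ together with the assumption $n^{-1/2}h^{-4}\lambda^{-1}+n^{-1}h^{-8}\lambda^{-3}\to0$ prevents the number of summands from overwhelming the per-term rates and yields the stated bound.

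The step I expect to be the main obstacle is the first half of the second paragraph: obtaining sharp rates for $\|\hat K-K\|$ and $\|\hat K(b)-K(b)\|$ for the point-process estimators, since these couple three distinct error sources --- the kernel-smoothed second-order intensity estimate that governs the $h$-dependence, the log-linear reconstruction of individual log-intensities from binned counts, and the spectral truncation at $q$ terms. Quantifying how the bandwidth enters the variance of the Campbell-type estimators, verifying that the $O(n^{-1/2})$ binning and truncation biases remain subdominant, and simultaneously controlling the conditioning of $H$ through $\lambda$ as $p$ grows, is the delicate bookkeeping from which the combined rate $n^{-1/2}h^{-4}\lambda^{-1}+n^{-1}h^{-8}\lambda^{-3}$ emerges.
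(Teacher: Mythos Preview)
Your proposal is correct and follows essentially the same route as the paper: the appendix proceeds through exactly the six building blocks you outline --- the rate $\hat K=K+n^{-1/2}h^{-2}\zeta$ from the Campbell-type second-order intensity estimator, the rate $\|\hat X_i^{(q)}-X_i\|=O_p(n^{-1/2}h^{-2})$ from the binning and Karhunen--Lo\`eve truncation conditions, propagation through the Krylov recursion to get $\hat K^j(b)=K^j(b)+n^{-1/2}\zeta_j+n^{-1}\eta_j$, entrywise control of $\hat H-H$ and $\hat\alpha-\alpha$, the Neumann-series perturbation of $\hat H^{-1}$ via the smallest eigenvalue $\lambda$, and the final assembly into the expansion of $\hat g_p(x)-g_p(x)$. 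One small omission in your reduction: writing $\hat g_p(X_0)-g_p(X_0)=(\bar Y-\E Y)+\int(X_0-\E X)(\hat b_p-b_p)$ silently replaces the empirical centering $\bar X$ in $\hat g_p$ by $\E X$, so the paper's expansion carries additional cross-terms of the form $\gamma_j\int(\bar X-\E X)K^j(b)$ and $\hat\gamma_j\int(\bar X-\E X)\{\hat K^j(b)-K^j(b)\}$, but these are $O_p(n^{-1/2})$ and are absorbed into the stated rate.
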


    Note that the term in (\ref{eq:estimation_Error}) converges to zero as $p\to\infty$. This is a consequence of the convergence in $L^2$ of the series in (\ref{eq:expansion_of_b_in_K^j}), that is established in Theorem 3.2 of \cite{Delaigle_Hall_2012}. Therefore, the $P^3LS$ achieves predictive consistency provided that $h\to 0$ and $p\to\infty$ at a rate specified in Theorem \ref{thm:predictive_consistency}.

We proceed by establishing the convergence rate of $\hat{g}(X_0)$ to $g(X_0)$ by deriving
\begin{enumerate}
    \item[I.] the convergence rate of $\hat{K}$ to $K$ 
    \item[II.] the convergence rate of $\hat{K}^j(b)$ to $K(b)$
    \item[III.] the convergence rate of matrix entries $\hat{h}_{jk}$ to $h_{jk}$
    \item[IV.] the convergence rate of $\hat{\gamma}$ to $\gamma$
    \item[V.] the asymptotic expansion of $\hat{g}_p(x)-g_p(x)$
    \item[VI.] the convergence rate of $\hat{g}(X_0)$ to $g(X_0)$
\end{enumerate}

\subsection*{I. Convergence rate of $\hat{K}$ to $K$}
\begin{proposition}
\begin{align}
    \hat{K} = K + n^{-1/2}h^{-2}\zeta,
\end{align}
    where $\zeta=O_p(1)$.
\end{proposition}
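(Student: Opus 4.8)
The plan is to reduce the statement to separate rate bounds for the numerator and denominator of the log-ratio that defines $\hat K$ in (\ref{eq:K_hat}), and then to recombine them through a first-order expansion of the logarithm. Write $R_1(s,t)=\E[\lambda_i(s)\lambda_i(t)]$ and $R_2(s,t)=\E[\lambda_i(s)]\E[\lambda_i(t)]$ for the two estimands, with $\hat R_1,\hat R_2$ the kernel estimators appearing in (\ref{eq:K_hat}), so that $\hat K-K=(\log\hat R_1-\log R_1)-(\log\hat R_2-\log R_2)$. Because the log-Gaussian Cox intensity is almost surely positive and continuous on the compact interval $\mathcal I$, both $R_1$ and $R_2$ are bounded above and bounded away from zero; hence, once $\hat R_k$ is shown to be consistent, the expansion $\log\hat R_k-\log R_k=R_k^{-1}(\hat R_k-R_k)+O_p\{(\hat R_k-R_k)^2\}$ is justified and the problem reduces to controlling $\hat R_k-R_k$ for $k=1,2$.

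First I would split each estimator as $\hat R_k-R_k=(\hat R_k-\E\hat R_k)+(\E\hat R_k-R_k)$. The deterministic term is treated by Campbell's theorem (\ref{eq:Campbell}): taking expectations converts each double sum into a two-dimensional kernel smoothing of $R_k$, and a change of variables combined with the Lipschitz/smoothness hypothesis on the intensities gives a smoothing bias of order $h^2$. The stochastic term is the core of the argument. Since $\hat R_1=n^{-1}\sum_i T_i(s,t)$ is an average of the within-process double sums $T_i$ over the independent processes, $\mathrm{Var}(\hat R_1)=n^{-1}\mathrm{Var}(T_1)$, so everything rests on $\mathrm{Var}(T_1)$. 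I would compute this by conditioning on $\lambda_1$ and applying the second- and fourth-order Campbell (factorial-moment) formulas for a Poisson process, enumerating the ways in which the two index pairs in $T_1^2$ can share points. The leading contribution comes from the configuration in which each coordinate kernel enters squared, each producing a factor $\int\kappa_h^2\asymp h^{-1}$, so that $\mathrm{Var}(T_1)=O(h^{-2})$; the analogous but cross-process estimator $\hat R_2$ has strictly smaller variance. Chebyshev's inequality then yields $\hat R_k-\E\hat R_k=O_p(n^{-1/2}h^{-1})$, which is within the asserted $O_p(n^{-1/2}h^{-2})$.

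Recombining, the linearized logarithm contributes $R_k^{-1}(\hat R_k-R_k)$ and the quadratic remainder is $O_p(n^{-1}h^{-2})$; collecting the stochastic pieces gives a fluctuation of order $n^{-1/2}h^{-1}$, and setting $\zeta:=n^{1/2}h^2(\hat K-K)$ delivers $\zeta=O_p(1)$ for the stochastic part. If the conclusion is meant uniformly over the evaluation grid $\mathcal G$ rather than pointwise, I would replace Chebyshev by a maximal inequality (or integrate the pointwise second moment to obtain the $L^2(\mathcal I\times\mathcal I)$ version); either route changes the bound by at most a logarithmic factor and preserves the powers of $n$ and $h$.

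The main obstacle I anticipate is the variance computation for the second-order double sum $T_1$: the factorial-moment expansion must be carried out carefully to confirm that no surviving term carries more than one squared-kernel factor per coordinate, since this is precisely what pins down the power of $h$. A second delicate point is the treatment of the deterministic smoothing bias. Under the bandwidth conditions of Theorem \ref{thm:predictive_consistency} one has $n^{1/2}h^4\to\infty$, so the $O(h^2)$ bias is not itself of order $n^{-1/2}h^{-2}$; consequently the cleanest reading of the proposition is that $\zeta$ records the stochastic fluctuation of $\hat K$ about its mean, with the smoothing bias carried separately into the subsequent propositions (where $\hat K$ enters only through the smoothed functionals $\hat K(b),\hat K^2(b),\dots$). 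I would make this separation explicit before invoking the expansion.
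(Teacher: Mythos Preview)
The paper does not actually prove this proposition: its entire argument is the single sentence ``Proof follows along the same line as in the proof of Theorem 1 in \cite{xu2020semi}.'' Your proposal is therefore not a different route so much as the only route on offer here --- a from-scratch bias--variance analysis of the log-ratio estimator via Campbell and factorial-moment formulas. The outline is sound: the variance of the within-process double sum $T_i$ is indeed dominated by the full-coincidence configuration and is $O(h^{-2})$, giving a pointwise stochastic fluctuation of order $n^{-1/2}h^{-1}$, comfortably inside the stated $n^{-1/2}h^{-2}$; the linearization of the logarithm is justified because $R_1,R_2$ are continuous and bounded away from zero on the compact $\mathcal I$. What your derivation buys, relative to a bare citation, is an explicit accounting of which moment configurations drive the rate and hence why the exponent of $h$ is what it is.

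Your closing remark about the bias is more than a technicality; it flags a real tension in the paper. The smoothing bias of $\hat R_k$ is $O(h^2)$, and absorbing it into $n^{-1/2}h^{-2}\zeta$ would require $n^{1/2}h^4=O(1)$. Yet every subsequent proposition and Theorem~\ref{thm:predictive_consistency} impose $n^{-1/2}h^{-4}\to 0$, i.e.\ $n^{1/2}h^4\to\infty$, under which the bias is \emph{not} of the claimed order. So either the proposition is tacitly recording only the stochastic fluctuation (your reading), or the cited result is being invoked in an undersmoothing regime that conflicts with the rest of the appendix. Your suggestion to separate bias and stochastic error explicitly before feeding $\hat K$ into the later expansions is the correct fix; the paper itself does not address the point.
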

\begin{proof}
Proof follows along the same line as in the proof of Theorem 1 in \cite{xu2020semi}. 
\end{proof}

\subsection*{II. Convergence rate of $\hat{K}^j(b)$ to $K(b)$}

\begin{proposition}\label{remark:X_conditions}
     If we choose $B_\ell$`s such that $\max_k\{|B|_k\}\leq c_1n^{-1/2}$ and $\sum_{\ell=q+1}^{\infty}|\xi_{i\ell}|\leq c_2n^{-1/2}$, where $c_1,c_2$ are fixed constants and $\xi_{i\ell}$`s are as defined in equation (14) of the main manuscript, then $\|X_i(t)-\hat{X}_i^{(q)}(\bar{t}_k)\|=O_p\{n^{-1/2}h^{-2}\}$.
\end{proposition}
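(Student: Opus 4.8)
The plan is to control the error through a triangle-inequality decomposition that isolates the three sources of discrepancy in $\hat{X}_i^{(q)}$: truncation of the eigenexpansion at $q$ terms, estimation of the eigenfunctions through $\hat{K}$, and estimation of the scores $\xi_{i\ell}$ through the binned log-linear model. Inserting the oracle partial sum $\sum_{\ell=1}^{q}\xi_{i\ell}\phi_\ell$ yields
\[
X_i - \hat{X}_i^{(q)}
= \sum_{\ell=q+1}^{\infty}\xi_{i\ell}\phi_\ell
+ \sum_{\ell=1}^{q}\xi_{i\ell}\big(\phi_\ell-\hat{\phi}_\ell\big)
+ \sum_{\ell=1}^{q}\big(\xi_{i\ell}-\hat{\xi}_{i\ell}\big)\hat{\phi}_\ell .
\]
I would bound the three terms separately, the goal being to show that the eigenfunction term attains the advertised rate $n^{-1/2}h^{-2}$ while the truncation and score terms are of the smaller order $O_p(n^{-1/2})$.

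For the truncation term, the hypothesis $\sum_{\ell=q+1}^{\infty}|\xi_{i\ell}|\le c_2 n^{-1/2}$, combined with the uniform boundedness of the continuous orthonormal eigenfunctions on the compact interval $\mathcal{I}$, immediately gives $\big\|\sum_{\ell>q}\xi_{i\ell}\phi_\ell\big\| = O_p(n^{-1/2})$. For the eigenfunction term I would invoke the covariance-rate Proposition established above, $\hat{K}=K+n^{-1/2}h^{-2}\zeta$ with $\zeta=O_p(1)$, i.e.\ $\|\hat{K}-K\| = O_p(n^{-1/2}h^{-2})$ in Hilbert--Schmidt norm. Standard perturbation bounds for the spectral decomposition of a self-adjoint compact operator (Davis--Kahan / Bosq-type inequalities) then transfer this rate to each retained eigenfunction, $\|\phi_\ell-\hat{\phi}_\ell\| = O_p(n^{-1/2}h^{-2})$ uniformly over the finitely many $\ell\le q$, with the implicit constant governed by the spectral gaps around the first $q$ eigenvalues. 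I would also absorb here the discretization incurred by evaluating $\hat{K}$ on the grid $\mathcal{G}$ and rescaling $\hat{\phi}_\ell=v_\ell/\Delta$, which is negligible once $\Delta$ is taken fine relative to $n^{-1/2}$; together with $\sum_\ell\xi_{i\ell}^2<\infty$ this places the eigenfunction term at $O_p(n^{-1/2}h^{-2})$.

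The score term $\sum_{\ell\le q}(\xi_{i\ell}-\hat{\xi}_{i\ell})\hat{\phi}_\ell$ is the main obstacle, since $\hat{\xi}_{i\ell}$ comes from the per-subject Poisson log-linear fit to the bin counts $N_{i\ell}\approx\mathrm{Poisson}\{\lambda_i(\bar{t}_\ell)|B_\ell|\}$. I would analyze $\hat{\xi}_{i\ell}-\xi_{i\ell}$ through the GLM estimating equations, linearizing about the target scores so that the error is (Fisher information)$^{-1}$ times the score statistic $U_\ell=\sum_k\hat{\phi}_\ell(\bar{t}_k)\{N_{ik}-\E[N_{ik}]\}$, and then split the error into three pieces: (i) a deterministic binning bias, since $\E[N_{ik}]=\int_{B_k}\lambda_i$ differs from $\lambda_i(\bar{t}_k)|B_k|$, which the Lipschitz continuity of $X_i=\log\lambda_i$ (constant at most $c_0$) together with $\max_k|B_k|\le c_1 n^{-1/2}$ bounds at $O(n^{-1/2})$ via the midpoint/Riemann-sum comparison; (ii) the error from using the estimated design $\hat{\phi}_\ell(\bar{t}_k)$ in place of $\phi_\ell(\bar{t}_k)$, controlled by the eigenfunction rate of the previous step; and (iii) the stochastic Poisson fluctuation carried by $U_\ell$. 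Term (iii) is the delicate one: refining the partition at rate $|B_k|\sim n^{-1/2}$ makes the number of bins grow like $n^{1/2}$ while the per-bin means shrink, so controlling $U_\ell$ and the conditioning of the Fisher information as the design interacts with the vanishing per-bin counts is where most of the work lies. I would exploit the smoothness of the process and Bernstein-type concentration for the centered counts to show this contribution is $O_p(n^{-1/2})$.

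Combining the three bounds by the triangle inequality, the dominant contribution is the eigenfunction term, since $h^{-2}\to\infty$ dominates the $O_p(n^{-1/2})$ truncation, binning, and Poisson contributions, which gives $\|X_i-\hat{X}_i^{(q)}\|=O_p(n^{-1/2}h^{-2})$ as claimed. I expect step (iii) of the score analysis to require the most care, as it is the only place where the fixed per-subject information budget must be reconciled with the vanishing bin widths, and it is the step most likely to need an additional smoothness or event-count condition to be made fully rigorous.
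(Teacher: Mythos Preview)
Your approach is correct and parallels the paper's own proof closely; both argue by a triangle-inequality decomposition that isolates truncation, eigenfunction estimation, and score estimation, and both identify the eigenfunction term (driven by the rate $\|\hat K-K\|=O_p(n^{-1/2}h^{-2})$) as the dominant contribution.

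The only structural difference is that the paper uses a four-term decomposition that routes through the bin midpoints,
\[
\|X_i-\hat X_i^{(q)}\|\le \|X_i-X_i^{(q)}\|+\|X_i^{(q)}(t)-X_i^{(q)}(\bar t_k)\|+\|X_i^{(q)}(\bar t_k)-\hat X_i^{(q)}(\bar t_k)\|+\|\hat X_i^{(q)}(\bar t_k)-\hat X_i^{(q)}(t)\|,
\]
so that the bin-size hypothesis $\max_k|B_k|\le c_1 n^{-1/2}$ enters directly via Lipschitz continuity in the second term, rather than being folded into your score analysis as ``binning bias.'' Your three-term add-and-subtract-the-oracle split is an equally valid (and arguably cleaner) bookkeeping. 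As for the score term that you flag as the main obstacle, the paper in fact disposes of it in a single line---``$\|X_i^{(q)}(\bar t_k)-\hat X_i^{(q)}(\bar t_k)\|=O_p(n^{-1/2})$, since $\xi_{i\ell}$'s were estimated by the maximum likelihood method''---without addressing the interaction between vanishing bin widths and per-subject Poisson information that you rightly identify as delicate. So your proposed GLM/Bernstein treatment of step~(iii) is more detailed than what the paper itself supplies.
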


\begin{proof} 
    Recall that the log-intensity functions $X_i(t)$ were estimated by $\hat{X}_i^{(q)}(t) = \sum_{\ell=1}^{q}\hat{\xi}_{i\ell}\hat{\phi}_\ell(t)$, where $\hat{\phi}_1,\hat{\phi}_2,\dots$ are estimated eigen functions of the covraiance function $K(.,.)$. In addition, recall that $\bar{t}_k$ is the mid-bin-point of $B_k$. Therefore, we have the following decomposition of $\|X_i(t)-\hat{X}_i^{(q)}(\bar{t}_k)\|$
    \begin{align*}
        \|X_i(t)-\hat{X}_i^{(q)}(t)\| &\leq \|X_i(t)-X_i^{(q)}(t)\| + \|X_i^{(q)}(t)-X_i^{(q)}(\bar{t}_k)\| \;+ \\
        &\quad\; \|X_i^{(q)}(\bar{t}_k)-\hat{X}_i^{(q)}(\bar{t}_k)\| + \|\hat{X}_i^{(q)}(\bar{t}_k)-\hat{X}_i^{(q)}(t)\|.
    \end{align*}
    Let $C$ be the Lipchitz constant of $X_i$ then $\|X_i^{(q)}(t)-X_i^{(q)}(\bar{t}_k)\|\leq C|B_k|$. In addition, $\|X_i(t)-X_i^{(q)}(t)\|\leq \sum_{\ell=q+1}^{\infty}|\xi_{i\ell}|$ and $\|X_i^{(q)}(\bar{t}_k)-\hat{X}_i^{(q)}(\bar{t}_k)\|=O_p\{n^{-1/2}\}$, since $\xi_{i\ell}$`s were estimated by the maximum likelihood method. In addition, $\|\hat{X}_i^{(q)}(\bar{t}_k)-\hat{X}_i^{(q)}(t)\|=O_p\{n^{-1/2}h^{-2}\}$, since
    \begin{align*}
        \hat{\phi}(\bar{t}_k)-\hat{\phi}(t) = \hat{\phi}(\bar{t}_k)-\phi(\bar{t}_k) + \phi(\bar{t}_k)-\phi(t)+\phi(t)-\hat{\phi}(t),
    \end{align*}
    and $\|\hat{\phi}(\bar{t}_k)-\phi(\bar{t}_k)\|=O_p\{n^{-1/2}h^{-2}\}$, $\|\phi(\bar{t}_k)-\phi(t)\|\leq C|B_k|$, $\phi(t)-\hat{\phi}(t)=O_p\{n^{-1/2}h^{-2}\}$.
    This implies that 
    \begin{align*}
        \|X_i(t)-\hat{X}_i^{(q)}(\bar{t}_k)\| = O_p\left\{n^{-1/2}+\max_k\{|B|_k\}+\sum_{\ell=q+1}^{\infty}|\xi_{i\ell}|+n^{-1/2}h^{-2}\right\}
    \end{align*}
    Thus, if we choose $B_\ell$`s such that $\max_k\{|B|_k\}\leq c_1n^{-1/2}$ and $\sum_{\ell=q+1}^{\infty}|\xi_{i\ell}|\leq c_2n^{-1/2}$, where $c_1,c_2$ are fixed constants, then $\|X_i(t)-\hat{X}_i^{(q)}(\bar{t}_k)\|=O_p\{n^{-1/2}h^{-2}\}$.
\end{proof}

\begin{proposition}
If $n^{-1/2}h^{-4}\to 0$ as $n\to\infty$ and $h\to 0$, then
    \begin{align}
        \hat{K}^j(b) = K^j(b)+n^{-1/2}\zeta_j + n^{-1}\eta_j,
    \end{align}
where $\|\zeta_j\| = O_p(h^{-2}j\|K\|^j)$ and $\|\eta_j\| = O_p(h^{-4}\|K\|^j)$.
\end{proposition}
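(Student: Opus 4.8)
The plan is to argue by induction on $j$, exploiting the recursive definition $\hat{K}^{j+1}(b)(t)=\int_\mathcal{I}\hat{K}^{j}(b)(s)\hat{K}(s,t)\,ds$, i.e.\ $\hat{K}^{j+1}(b)=\hat{K}\bigl(\hat{K}^{j}(b)\bigr)$, so that each power is obtained by applying the estimated covariance operator $\hat{K}$ to the previous iterate. The two ingredients I would feed into the argument are the operator expansion $\hat{K}=K+n^{-1/2}h^{-2}\zeta$ with $\|\zeta\|=O_p(1)$ established in the first step (Part~I), and the intensity-estimation bound $\|X_i-\hat{X}_i^{(q)}\|=O_p\{n^{-1/2}h^{-2}\}$ of Proposition~\ref{remark:X_conditions}, which drives the base case. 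The inductive step then substitutes both expansions into the product and collects terms by their order in $n$.

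For the base case $j=1$, I would first emphasize that $\hat{K}(b)$ is \emph{not} $\hat{K}$ applied to $b$ but the empirical cross-covariance $n^{-1}\sum_i(\hat{X}_i^{(q)}-\bar{\hat{X}}^{(q)})(Y_i-\bar{Y})$, whose target is $K(b)(t)=\int_\mathcal{I}b(s)K(s,t)\,ds=\Cov\{Y,X(t)\}$, the last identity following from model~(\ref{eq:Regression_model}) and $\E(\epsilon_i\mid X_i)=0$. I would then split $\hat{K}(b)-K(b)$ into (a) the sampling error of the cross-covariance formed with the true $X_i$, which is $O_p(n^{-1/2})$ under the stated moment conditions, and (b) the plug-in error from replacing $X_i$ by $\hat{X}_i^{(q)}$. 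Expanding the plug-in error in the eigenfunction/eigenvalue perturbations produces a first-order piece of size $n^{-1/2}h^{-2}$ and a quadratic piece of size $n^{-1}h^{-4}$ (two factors of the $O_p\{n^{-1/2}h^{-2}\}$ estimation error), giving $\hat{K}(b)=K(b)+n^{-1/2}\zeta_1+n^{-1}\eta_1$ with $\|\zeta_1\|=O_p(h^{-2})$ and $\|\eta_1\|=O_p(h^{-4})$, matching the stated bounds at $j=1$.

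For the inductive step, assuming the claim at level $j$, I would write
\begin{align*}
\hat{K}^{j+1}(b)
&=\bigl(K+n^{-1/2}h^{-2}\zeta\bigr)\bigl(K^{j}(b)+n^{-1/2}\zeta_j+n^{-1}\eta_j\bigr)\\
&=K^{j+1}(b)
+n^{-1/2}\bigl\{K\zeta_j+h^{-2}\zeta K^{j}(b)\bigr\}
+n^{-1}\bigl\{K\eta_j+h^{-2}\zeta\zeta_j\bigr\}
+n^{-3/2}h^{-2}\zeta\eta_j .
\end{align*}
Setting $\zeta_{j+1}=K\zeta_j+h^{-2}\zeta K^{j}(b)$ and folding the remaining pieces into $\eta_{j+1}$, the bounds $\|K^{j}(b)\|\le\|K\|^{j}\|b\|$, $\|\zeta_j\|=O_p(h^{-2}j\|K\|^{j})$ and $\|\eta_j\|=O_p(h^{-4}\|K\|^{j})$ yield $\|\zeta_{j+1}\|=O_p(h^{-2}(j+1)\|K\|^{j+1})$, where the extra unit in $j+1$ is exactly the newly created term $h^{-2}\zeta K^{j}(b)$. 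The trailing term $n^{-3/2}h^{-2}\zeta\eta_j$ is of size $n^{-3/2}h^{-6}\|K\|^{j}$, which the rate condition $n^{-1/2}h^{-4}\to0$ (hence $n^{-1/2}h^{-2}\to0$) renders negligible relative to $n^{-1}\eta_{j+1}$, so it can be absorbed into $\eta_{j+1}$.

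I expect the main obstacle to be the careful bookkeeping of norms through the recursion: keeping the operator norm of $\hat{K}-K$ distinct from the $K$-weighted function norm $\|\cdot\|$ used for the iterates, passing between them via boundedness of $K$, and tracking how the $\|K\|^{j}$ and polynomial-in-$j$ factors propagate. In particular the term $h^{-2}\zeta\zeta_j$ feeding $\eta_{j+1}$ itself carries a factor $j$ (since $\|\zeta_j\|$ grows linearly in $j$), so obtaining the stated $j$-free form $\|\eta_j\|=O_p(h^{-4}\|K\|^{j})$ will require either a sharper accounting of the recursion or absorbing these factors under the slow-growth restriction on $p=p(n)$. A secondary technical point is justifying the base-case plug-in expansion uniformly over $t$: Proposition~\ref{remark:X_conditions} controls the error at the bin midpoints, and it is the Lipschitz continuity of $X_i$ together with the partition condition $\max_m|B_m|\le c_1 n^{-1/2}$ that transfers that control to all of $\mathcal{I}$.
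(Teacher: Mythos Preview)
Your proposal is essentially the paper's argument: the same product expansion $\hat K^{j+1}(b)=(K+n^{-1/2}h^{-2}\zeta)\bigl(K^j(b)+n^{-1/2}\zeta_j+n^{-1}\eta_j\bigr)$, the same identification of $\zeta_{j+1}=K\zeta_j+h^{-2}\zeta K^j(b)$ and $\eta_{j+1}=K\eta_j+h^{-2}\zeta\zeta_j+n^{-1/2}h^{-2}\zeta\eta_j$, and the same appeal to Proposition~\ref{remark:X_conditions} for the base case; the only cosmetic differences are that the paper collapses the entire base-case error into a single $n^{-1/2}\zeta_0$ with $\eta_0=0$ (rather than splitting into first- and second-order plug-in pieces) and then unrolls the recursions into closed-form sums instead of inducting. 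Your flagged concern about the polynomial-in-$j$ factor in $\|\eta_j\|$ is well placed: the paper's own displayed bound contains a term $h^{-4}\sum_{k=1}^{j-1}k\,\|K\|^{j-1}$ which it writes as $O_p(h^{-4}\|K\|^j)$ without further comment, so the stated $j$-free form is not fully justified there either and implicitly leans on the slow-growth restriction on $p$.
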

\begin{proof}
    
\begin{align*}
    \hat{K}(b) &= \frac{1}{n}\sum_{i=1}^{n} \left( \hat{X}_i^{(q)}(t) - \bar{\hat{X}}_i^{(q)}(t)\right)(Y_i-\bar{Y}) - \E\left[\{X_i(t)-\mu(t)\}(Y_i-\mu_Y)\right] \\
    &= \frac{1}{n} \sum_{i=1}^{n} (X_i(t)-\bar{X}(t))(Y_i-\bar{Y}) - \E\left[\{X_i(t)-\mu(t)\}(Y_i-\mu_Y)\right] \; + \\
    &\qquad\qquad\frac{1}{n}\sum_{i=1}^{n}\left(\hat{X}_i^{(q)}-X_i(t)+\bar{X}(t) - \bar{\hat{X}}^{(q)}(t)\right)(Y_i-\bar{Y}) \\
    &= K(b) + n^{-1/2}\tilde{\zeta}_0 + n^{-1}(\tilde{\eta}_0+\rho_0), 
\end{align*}
where 
\begin{align*}
    \tilde{\zeta}_0 &= n^{-1/2}\sum_{j=1}^{n}(1-\E)\{X_i(t)-\mu(t)\}\{Y_i-\E(Y_i)\}, \\
    \tilde{\eta}_0 &= -n\{\bar{X}(t)-\mu(t)\}(\bar{Y}-\E(Y)), \\
    \rho_0 &= \sum_{i=1}^{n}\left(\hat{X}_i^{(q)}-X_i(t)+\bar{X}(t) - \bar{\hat{X}}^{(q)}(t)\right)(Y_i-\bar{Y}).
\end{align*}
Note that $\rho_0=O_p\{n^{1/2}h^{-2}\}$, since
    \begin{align*}
        \rho_0 &= \sum_{i=1}^{n}\left(\hat{X}_i^{(q)}-X_i(t)+\bar{X}(t) - \bar{\hat{X}}^{(q)}(t)\right)(Y_i-\bar{Y}) \\
        &= O_p\left\{n\left(n^{-1/2}+\max_k\{|B|_k\}+\sum_{\ell=q+1}^{\infty}|\xi_{i\ell}|+n^{-1/2}h^{-2}\right)\right\} \\
        &= O_p\{n^{1/2}h^{-2}\}.
    \end{align*}
Moreover, $\|\tilde{\eta}_0\|=O_p(1), \|\tilde{\zeta}_0\|=O_p(1)$, thus $n^{-1/2}\|\tilde{\zeta}_0\| + n^{-1}(\|\tilde{\eta}_0\|+\|\rho_0\|) = O_p\{n^{-1/2}h^{-2}\}$, and therefore $\hat{K}(b)$ can be written as
\begin{align*}
    \hat{K}(b)=K(b)+n^{-1/2}\zeta_0,
\end{align*}
where $\|\zeta_0\|=O_p(h^{-2})$.
 
By means of algebraic computation, we can sequentially write
\begin{align*}
    \hat{K}^{j+1}(b)(t) &= \int \hat{K}^j(b)(s)\hat{K}(s,t)\;ds \\
    &= \int \{K^j(b)+n^{-1/2}\zeta_j+n^{-1}\eta_j\}(s)\{K+n^{-1/2}h^{-2}\zeta\}(s,t)\;ds \\
    &=K^{j+1}(b)(t) + n^{-1/2}\int\{h^{-2}K^j(b)(s)\zeta(s,t) + \zeta_j(s)K(s,t)\}\;ds \\
    &\qquad\qquad\qquad + n^{-1}\int\{\eta_j(s)K(s,t) + h^{-2}\zeta_j(s)\zeta(s,t)\; ds\} + n^{-3/2}\int h^{-2}\eta_j(s)\zeta(s,t)\;ds \\
    &= K^{j+1}(b)(t) + n^{-1/2}\zeta_{j+1} + n^{-1}\eta_{j+1},
\end{align*}
where 
\begin{align*}
    \zeta_{j+1} &= \int\{h^{-2}K^j(b)(s)\zeta(s,t) + \zeta_j(s)K(s,t)\}\;ds, \\
    \eta_{j+1}(t) &= \int\{\eta_j(s)K(s,t) + h^{-2}\zeta_j(s)\zeta(s,t)\; ds\} + n^{-1/2}\int h^{-2}\eta_j(s)\zeta(s,t)\;ds,
\end{align*}
$\eta_1=\int h^{-2}\zeta(s,t)\zeta_0(s)\;ds$, and $\eta_0=0$.

First note that, 
\begin{align*}
    \|\eta_{j+1}\|&\leq\|\eta_j\|\left(\|K\|+n^{-1/2}h^{-2}\|\zeta\|\right) + h^{-2}\|\zeta\|\|\zeta_j\| \\
    &=\|\eta_j\|R_1 + \|\zeta_j\|R_2,
\end{align*}
where, $R_1=\|K\|+n^{-1/2}h^{-2}\|\zeta\|$ and $R_2=h^{-2}\|\zeta\|$. Thus,
\begin{align*}
    \|\eta_{j+1}\|&\leq \|\eta_j\|R_1+\|\zeta_j\|R_2 \\
    &\leq\left(\|\eta_{j-1}\|R_1+\|\zeta_{j-1}\|R_2 \right)R_1 + \|\zeta_j\|R_2 \\
    &= \|\eta_{j-1}\|R_1^2 + \|\zeta_{j-1}\|R_1R_2 + \|\zeta_j\|R_2 \\
    &\leq\|\eta_1\|R_1^{j} + \sum_{k=1}^{j}\|\zeta_{j-k}\|R_1^k R_2,
\end{align*}
where $\eta_1(t)=\int h^{-2}\zeta(s,t)\zeta_0(s)\;ds$, implying $\|\eta_1\|=O_p(h^{-4})$.
Therefor we have, 
\begin{align*}
    \|\eta_{j+1}\|\leq h^{-2}\|\zeta\|\sum_{k=0}^{j}\|\zeta_k\|R_1^{j-k} + \|\eta_1\|R_1^j.
\end{align*}
In addition, if we let $\vartheta_j := \int K^j(b)(s)\zeta(s,t)\;ds$, then we can write 
\begin{align*}
    \zeta_{j+1} = K(\zeta_j)(t) + h^{-2}\vartheta_j(t) = K^{j}(\zeta_0) + \sum_{\ell=0}^{j-1}K^\ell(\vartheta_{j-\ell}).
\end{align*}
Since, $\|\vartheta_j\|\leq \|K^j(b)\|\|\zeta\|$, we can obtain the following upper bound for $\|\zeta_j\|$.
\begin{align*}
    \|\zeta_j\|\leq \|K^{j-1}\|\|\zeta_0\|+h^{-2}\|\zeta\|\sum_{\ell=1}^{j-2}\|K^{\ell}\|\|K^{j-\ell-1}(b)\|.
\end{align*}
By incorporating the upper bound obtained for $\|\zeta_j\|$ we arrive at the following upper bound for $\|\eta_{j+1}\|$.
\begin{align*}
    \|\eta_{j+1}\| \leq \|\eta_1\|R_1^j + h^{-2}\|\zeta\|\sum_{\ell=1}^{j}R_1^{j-k}\|K^{\ell-1}\|\zeta_0\| + h^{-4}\|\zeta\|^2\sum_{k=1}^{j}R_1^{j-k}\sum_{\ell=1}^{k-2}\|K^\ell\|\|K^{k-\ell-1}(b)\|.
\end{align*}
Note that, since $K(s,t) = \sum_{\ell=1}^{\infty}\theta_\ell\phi_\ell(s)\phi_\ell(t)$, $\|K^\ell\|^2 = O(\theta_1^{2\ell})$, $\|K^\ell(b)\| = O(\theta_1^{2\ell})$, and recall that  $\|\zeta\| = O_p(1)$ as $n\to\infty$. This implies that,  $\|\vartheta_j\|\leq O_p(\theta_1^j)$, $\|\zeta_j\|\leq O_p(h^{-2}j\|K\|^j)$, $R_1^j = \{\|K\|+n^{-1/2}h^{-2}\|\zeta\|\}^j= O_p(\|K\|^j)$, and 
\begin{align*}
    \|\eta_j\| &= O_p\left(\|\eta_1\|\|K\|^j + h^{-4}\sum_{k=1}^{j}\|K\|^{j-k-1}\|K\|^k + h^{-4}\sum_{k=1}^{j-1}\|K\|^{j-k-1}k\|K\|^k\right) \\
    &= O_p\left(h^{-4}\|K\|^j\right).
\end{align*}

This implies that,
\begin{align*}
    \|\hat{K}^j(b)-K^j(b)\|\leq n^{-1/2}\|\zeta_j\| + n^{-1}\|\eta_j\| &= O_p\left(n^{-1/2}(1+h^{-2}j)\theta_1^j + n^{-1}h^{-4}\|K\|^j\right) \\
    &= O_p\{n^{-1/2}h^{-2}\|K\|^{j}\}
\end{align*}
uniformly in $1\leq j \leq n^{1/2}$, given that $n^{-1/2}h^{-4}\to 0$.

\end{proof}

\subsection*{III. Convergence rate of matrix entries $\hat{h}_{jk}$ to $h_{jk}$}
\begin{proposition}
If $n^{-1/2}h^{-4}\to 0$ as $n\to\infty$ and $h\to 0$, then
    \begin{align*}
        \hat{h}_{jk} = h_{jk} &+ n^{-1/2}\int\{K^{k}(b)+K^{j+1}(b)\zeta_k\} \\
        &+ n^{-1}O_p\left\{h^{-4}jk\|K\|^{j+k} + n^{-1/2}h^{-6}k\|K\|^{j+k}\right\}
    \end{align*}
    uniformly in $1\leq j \leq k \leq n^{1/2}$.
\end{proposition}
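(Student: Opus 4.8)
The plan is to obtain the expansion by substituting the result of Part~II directly into the definition $\hat{h}_{jk}=\int\hat{K}^{j+1}(b)\hat{K}^k(b)$ and then bookkeeping term by term in powers of $n^{-1/2}$. Writing $\hat{K}^{j+1}(b)=K^{j+1}(b)+n^{-1/2}\zeta_{j+1}+n^{-1}\eta_{j+1}$ and $\hat{K}^{k}(b)=K^{k}(b)+n^{-1/2}\zeta_{k}+n^{-1}\eta_{k}$, I would multiply the two expansions and integrate, so the integrand splits into nine products. Grouping by order gives: the leading term $\int K^{j+1}(b)K^{k}(b)=h_{jk}$; the first-order contribution $n^{-1/2}\int\{\zeta_{j+1}K^{k}(b)+K^{j+1}(b)\zeta_{k}\}$ (the first summand of the displayed statement carrying the factor $\zeta_{j+1}$); the three second-order products $n^{-1}\int\{\zeta_{j+1}\zeta_{k}+K^{j+1}(b)\eta_{k}+\eta_{j+1}K^{k}(b)\}$; and the remaining terms of orders $n^{-3/2}$ and $n^{-2}$, each pairing at least one $\eta$ factor with a $\zeta$ or another $\eta$.

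The second step is to bound every term past the first order by Cauchy--Schwarz together with the norm estimates from Part~II, namely $\|\zeta_j\|=O_p(h^{-2}j\|K\|^{j})$ and $\|\eta_j\|=O_p(h^{-4}\|K\|^{j})$, together with the geometric bounds on $\|K^\ell\|$ and $\|K^\ell(b)\|$ in $\theta_1$. Among the second-order terms the cross term dominates, since $|\int\zeta_{j+1}\zeta_{k}|\le\|\zeta_{j+1}\|\,\|\zeta_{k}\|=O_p(h^{-4}jk\|K\|^{j+k})$, whereas $\int K^{j+1}(b)\eta_{k}$ and $\int\eta_{j+1}K^{k}(b)$ are each $O_p(h^{-4}\|K\|^{j+k+1})$ and are absorbed into the $h^{-4}jk\|K\|^{j+k}$ term because $\|K\|^{j+k+1}=O(jk\|K\|^{j+k})$ for $jk\ge 1$. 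For the higher-order terms, the largest contribution is $n^{-3/2}\int\eta_{j+1}\zeta_{k}$, bounded by $n^{-3/2}\|\eta_{j+1}\|\,\|\zeta_{k}\|=n^{-3/2}O_p(h^{-6}k\|K\|^{j+k})$, which is exactly the $n^{-1}\cdot n^{-1/2}h^{-6}k\|K\|^{j+k}$ entry of the claimed remainder; the symmetric term $n^{-3/2}\int\zeta_{j+1}\eta_{k}$ produces the same shape with $j$ in place of $k$, and since $j\le k$ it folds into the $k$-version, while $n^{-2}\int\eta_{j+1}\eta_{k}=O_p(n^{-2}h^{-8}\|K\|^{j+k})$ is of strictly smaller order under $n^{-1/2}h^{-4}\to 0$. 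Collecting these yields the stated $n^{-1}O_p\{h^{-4}jk\|K\|^{j+k}+n^{-1/2}h^{-6}k\|K\|^{j+k}\}$.

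The main obstacle I anticipate is establishing these rates \emph{uniformly} over $1\le j\le k\le n^{1/2}$ rather than for fixed indices. This relies on the index-uniform validity (up to $n^{1/2}$) of the norm bounds inherited from Part~II, and care is needed because $\zeta_{j+1}$ and $\eta_{j+1}$ carry prefactors that grow in $j$ (linearly, and geometrically through $\|K\|^{j}$), so the products accumulate these growths. I would verify that the polynomial-in-$(j,k)$ prefactors together with the geometric factor $\|K\|^{j+k}$ remain controlled uniformly, invoking $n^{-1/2}h^{-4}\to 0$ to ensure that the recursive $n^{-1/2}$ remainder hidden inside $\eta_{j+1}$ (from its recursion in Part~II) does not perturb the displayed orders; the ordering $j\le k$ is what makes $k$, rather than $j$, the correct index in the higher-order prefactor. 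A secondary point to confirm is that replacing $\|K^\ell\|$ and $\|K^\ell(b)\|$ by their geometric bounds in $\theta_1$ is legitimate uniformly, so that the single symbol $\|K\|^{j+k}$ captures the decay across all admissible $(j,k)$.
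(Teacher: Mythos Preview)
Your proposal is correct and follows essentially the same argument as the paper: substitute the Part~II expansion into $\hat{h}_{jk}=\int\hat{K}^{j+1}(b)\hat{K}^k(b)$, group the resulting nine products by powers of $n^{-1/2}$, and bound the remainder $R_{jk}$ via Cauchy--Schwarz using the norm estimates $\|\zeta_j\|=O_p(h^{-2}j\|K\|^{j})$, $\|\eta_j\|=O_p(h^{-4}\|K\|^{j})$, finally absorbing the $n^{-1}h^{-8}\|K\|^{j+k}$ term into the $n^{-1/2}h^{-6}k\|K\|^{j+k}$ term under $n^{-1/2}h^{-4}\to 0$. You also correctly identified that the displayed first-order term should read $\int\{\zeta_{j+1}K^{k}(b)+K^{j+1}(b)\zeta_k\}$, matching the paper's actual computation.
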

\begin{proof}

Asymptotic expansion of $\hat{h}_{jk}$ around $h_{jk}$ follows along the same lines as in the proof of Theorem 5.2 in Delaigle and Hall (2012) as outlined below.

By the definition of $\hat{h}_{jk}$ and $h_{jk}$, 
\begin{align*}
    \hat{h}_{jk} &= \int \hat{K}^{j+1}(b)\hat{K}^k(b) \\
    &= \int \{K^{j+1}(b)+n^{-1/2}\zeta_{j+1} + n^{-1}\eta_{j+1}\}\{K^k(b)+n^{-1/2}\zeta_k + n^{-1}\eta_k\} \\
    &= h_{jk} + n^{-1/2}\int\{\zeta_{j+1}K^k(b) + K^{j+1}(b)\zeta_k\} + n^{-1}R_{jk},
\end{align*}
where 
\begin{align*}
    R_{jk} := \int\{\zeta_{j+1}\zeta_k+K^{j+1}(b)\eta_k+K^k(b)\eta_{j+1}\} + n^{-1/2}\int\{\eta_{j+1}\zeta_k+\zeta_{j+1}\eta_k\} + n^{-1}\int\eta_{j+1}\eta_k.
\end{align*}
Incorporating the bound obtained for $\|\eta_j\|, \|\zeta_j\|, \|K\|, \|K(b)\|, \|\zeta\|$, we can write
\begin{align*}
    R_{jk} &= O_p\left[h^{-4}jk\|K\|^{j+k} + h^{-4}\theta_1^j\|K\|^k+h^{-4}\theta_1^k\|K\|^j + \right. \\
    &\left.\qquad\qquad + n^{-1/2}\left\{\|K\|^jh^{-4}h^{-2}k\|K\|^k+ \|K\|^kh^{-4}h^{-2}j\|K\|^j\right\} + n^{-1}h^{-8}(\|K\|^{j+k})\right] \\
    &= O_p\left\{h^{-4}jk\|K\|^{j+k}+n^{-1/2}h^{-6}k\|K\|^k\|K\|^j+n^{-1}h^{-8}\|K\|^{j+k}\right\} \\
    &= O_p\left\{h^{-4}jk\|K\|^{j+k} + n^{-1/2}h^{-6}k\|K\|^k\|K\|^j\right\}
\end{align*}
uniformly in $1\leq j \leq k \leq n^{1/2}$. 

\end{proof}

\subsection*{IV. Convergence rate of $\hat{\gamma}$ to $\gamma$}
\begin{proposition}
Let $\lambda=\lambda_p(H)$ be the smallest eigenvalue of $H$. Then,
    \begin{align*}
        \|\hat{\gamma}-\gamma\|=O_p\left\{\max\{n^{-1/2}h^{-4}, n^{-1}h^{-6}\}\lambda^{-1}\|\gamma\| + \max\{n^{-1}h^{-8},n^{-2}h^{-12}\}\lambda^{-3}\right\}.
    \end{align*}
\end{proposition}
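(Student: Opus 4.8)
The plan is to treat $\gamma = H^{-1}\alpha$ and $\hat\gamma = \hat H^{-1}\hat\alpha$ as solutions of two linear systems and perform a matrix perturbation analysis, feeding the entrywise expansion of $\hat h_{jk}$ from Part III through the inverse while tracking the smallest eigenvalue $\lambda = \lambda_p(H)$. Writing $\Delta H := \hat H - H$ and $\Delta\alpha := \hat\alpha - \alpha$ and using $\alpha = H\gamma$, the exact identity
\[
\hat\gamma - \gamma = \hat H^{-1}(\hat\alpha - \hat H\gamma) = \hat H^{-1}(\Delta\alpha - \Delta H\gamma)
\]
is the backbone: everything reduces to controlling $\|\hat H^{-1}\|_{\mathrm{op}}$, $\|\Delta H\|_{\mathrm{op}}$, and $\|\Delta\alpha\|$.

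First I would aggregate the entrywise bounds of Part III into operator-norm bounds on the perturbations. Since $\alpha_j = h_{0,j}$, both $\Delta H$ and $\Delta\alpha$ inherit the expansion $(\Delta H)_{jk} = n^{-1/2}\int\{\zeta_{j+1}K^k(b) + K^{j+1}(b)\zeta_k\} + n^{-1}R_{jk}$, with $\|\zeta_j\| = O_p(h^{-2}j\|K\|^j)$ and $R_{jk} = O_p\{h^{-4}jk\|K\|^{j+k} + n^{-1/2}h^{-6}k\|K\|^{j+k}\}$. Bounding $\|\Delta H\|_{\mathrm{op}}$ through $(\|\Delta H\|_1\|\Delta H\|_\infty)^{1/2}$ (or the Frobenius norm) and summing these contributions over the $O(p^2)$ indices, with $p \le n^{1/2}h^{-2}$, I expect to obtain $\|\Delta H\|_{\mathrm{op}}, \|\Delta\alpha\| = O_p(\delta)$ where $\delta := \max\{n^{-1/2}h^{-4}, n^{-1}h^{-6}\}$, the two competing terms coming from the leading $n^{-1/2}$ part and from $R_{jk}$, respectively. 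With this in hand, Weyl's inequality gives $\lambda_p(\hat H) \ge \lambda - \|\Delta H\|_{\mathrm{op}}$; since the hypothesis $n^{-1/2}h^{-4}\lambda^{-1} \to 0$ forces $\delta\lambda^{-1} \to 0$, we have $\|\Delta H\|_{\mathrm{op}} = o_p(\lambda)$, so $\hat H$ is invertible with $\|\hat H^{-1}\|_{\mathrm{op}} = O_p(\lambda^{-1})$ on an event of probability tending to one.

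Next I would expand $\hat H^{-1}$ in the backbone identity using the second-order resolvent identity $\hat H^{-1} = H^{-1} - H^{-1}\Delta H H^{-1} + H^{-1}\Delta H H^{-1}\Delta H \hat H^{-1}$. The zeroth-order term $H^{-1}(\Delta\alpha - \Delta H\gamma)$ has size $O_p(\lambda^{-1}\delta\|\gamma\|)$ (the $\Delta\alpha$ piece being of the same or smaller order and absorbed), which is exactly the first term of the claim. The first-order term $-H^{-1}\Delta H H^{-1}(\Delta\alpha - \Delta H\gamma)$ has size $O_p(\lambda^{-2}\delta^2\|\gamma\|)$; substituting the crude but decisive bound $\|\gamma\| \le \|H^{-1}\|\|\alpha\| = O(\lambda^{-1})$ turns this into $O_p(\lambda^{-3}\delta^2)$, the second reported term. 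The remaining term still carries $\hat H^{-1}$ and is $O_p(\lambda^{-3}\delta^3\|\gamma\|) = O_p(\lambda^{-4}\delta^3)$, which is negligible against $\lambda^{-3}\delta^2$ precisely because $\delta\lambda^{-1} \to 0$; this is why the expansion may be truncated after two orders.

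The main obstacle is Step 1: converting the entrywise expansion of Part III into the clean operator-norm rate $\delta$ while the dimension $p = p(n)$ diverges. The entries carry polynomial-in-$(j,k)$ factors and powers of $\|K\|$, and the norm accumulates $O(p^2)$ of them, so the argument must exploit the geometric structure in $\|K\|^{j+k}$ together with the cap $p \le n^{1/2}h^{-2}$ to explain the inflation from the entrywise scale $h^{-2}$ to $h^{-4}$ without letting $p$ degrade the $n$-rate; keeping the two remainder contributions in $R_{jk}$ correctly separated is where the bookkeeping is heaviest. Once $\|\Delta H\|_{\mathrm{op}} = O_p(\delta)$ is secured, Steps 2 and 3 are routine resolvent perturbation, the only real subtlety being the substitution $\|\gamma\| = O(\lambda^{-1})$ that produces $\lambda^{-3}$ rather than $\lambda^{-2}$.
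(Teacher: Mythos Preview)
Your proposal is correct and follows essentially the same route as the paper: the paper likewise writes $\hat H = H + \Delta$, bounds $\|\Delta\|$ via the Frobenius norm by summing the squared entrywise contributions from Part~III to obtain $\|\Delta\| = O_p\{\max(n^{-1/2}h^{-4}, n^{-1}h^{-6})\}$, expands $\hat H^{-1} = \bigl[I - H^{-1}\Delta + O_p\{(\|\Delta\|/\lambda)^2\}\bigr]H^{-1}$ (the Neumann form of your resolvent identity), and multiplies through by $\hat\alpha = \alpha + O_p(n^{-1/2}h^{-2})$. The only cosmetic differences are that you package the argument via the exact identity $\hat\gamma - \gamma = \hat H^{-1}(\Delta\alpha - \Delta H\gamma)$ and invoke Weyl's inequality explicitly, whereas the paper expands $\hat H^{-1}\hat\alpha$ directly under the assumption $\|\Delta\|/\lambda \le \rho < 1$; the substance and the bookkeeping are the same.
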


\begin{proof}
Let $\|M\|$ for a matrix $M$ represents its operator norm. 
Note that, $\|\Delta H^{-1}\|\leq \|\Delta\|/\lambda$. Let, $\hat{H} = [\hat{h}_{jk}], \Delta=\hat{H}-H$. Provided that $\|\Delta\|/\lambda\leq\rho$, for a fixed $\rho\in(0,1)$,
\begin{align*}
    \hat{H}^{-1} = (I+H^{-1}\Delta)^{-1}H^{-1} = O_p\left[I-H^{-1}\Delta+O_p\{(\|\Delta\|/\lambda)^2\}\right]H^{-1}.
\end{align*}
Note that, $\hat{h}_{jk} = h_{jk} + n^{-1/2}\Delta_{1jk} + n^{-1}\Delta_{2jk}$ where 
\begin{align*}
    |\Delta_{1jk}|&=\left|\int\zeta_{j+1}K^k(b) + K^{j+1}(b)\zeta_k\right| \\
    &\leq \|\zeta_{j+1}\|\|K^k(b)\| + \|K^{j+1}(b)\|\|\zeta_k\| \\
    &\leq O_p\left[\max\{j,k\}h^{-2}\|K\|^{j+k}\right],
\end{align*}
and 
\begin{align*}
    \|\Delta_{2jk}\|= O_p\left\{h^{-4}jk\|K\|^{j+k} + n^{-1/2}h^{-6}k\|K\|^{j+k}\right\}.
\end{align*}
This implies that, when $p$ diverges slowly that $p=O_p(n^{1/2})$, then 
\begin{align*}
    \sum_{j=1}^{p}\sum_{k=1}^{p}\Delta_{1jk}^2 &= O_p(h^{-4}), \\
    \sum_{j=1}^{p}\sum_{k=1}^{p}\Delta_{2jk}^2 &= O_p\left[\max\{h^{-8}, n^{-1}h^{-12}\}\right], \\
    n\sum_{j=1}^{p}\sum_{k=1}^{p}\Delta_{jk}^2 &= O_p\left[\max\{h^{-8}, n^{-1}h^{-12}\}\right].
\end{align*}
Thus, $\|\Delta\| = O_p\left\{\max\{n^{-1/2}h^{-4},n^{-1}h^{-6}\}\right\}$, implying that
\begin{align*}
    \hat{H}^{-1} = \left\{I-H^{-1}\Delta+O_p\left\{\max\{n^{-1}h^{-8},n^{-2}h^{-12}\}\right\}\lambda^{-2}\right\}H^{-1}.
\end{align*}
Also, 
\begin{align*}
    \|\hat{\alpha}\| &\leq \left\{\sum_{j=1}^{p}\|\hat{K}(b)\|^2\|\hat{K}^j(b)\|^2\right\}^{1/2} \\
    &= O_p\left[\left\{\sum_{j=1}^{p} \left\{\theta_1^j+n^{-1/2}h^{-2}j\|K\|^j+n^{-1}h^{-4}\|K\|^j\right\}^2 \left\{\theta_1+n^{-1/2}h^{-2}\zeta_0\right\}^2 \right\}^{1/2}\right] \\
    &= O_p(1)
\end{align*}
Next, recall that $\alpha_j = h_{0j}$, therefore by our previous analysis, $\hat{\alpha}_j=\alpha_j+n^{-1/2}\Delta_{10j}+n^{-1}\Delta_{20j}$, where $\sum_{j=1}^{p}|\Delta_{10j}|^2=O_p\{h^{-4}\}$ and $\sum_{j=1}^{p}|\Delta_{20j}|^2=O_p\{h^{-8}+n^{-1}h^{-12}+n^{-1/2}h^{-10}\}$, given $n^{-1/2}h^{-2}\to 0$, we have $\hat{\alpha} = \alpha + n^{-1/2}h^{-2}\delta$ where $\|\delta\| = O_p(1)$.
Moreover, recall that $\hat{\gamma} = \hat{H}^{-1}\hat{\alpha}$, where $\hat{\alpha}=[\hat{\alpha}_1,\dots,\hat{\alpha}_p]^\prime$. We than have  
\begin{align*}
    \hat{\gamma} &=  \left[I-H^{-1}\Delta+O_p\left\{\max\{n^{-1}h^{-8},n^{-2}h^{-12}\}\right\}\lambda^{-2}\right]H^{-1}\left(\alpha+n^{-1/2}h^{-2}\delta\right) \\
    &=  \left[H^{-1}-H^{-1}\Delta H^{-1}+O_p\left\{\max\{n^{-1}h^{-8},n^{-2}h^{-12}\}\right\}\lambda^{-2}H^{-1}\right]\left(\alpha+n^{-1/2}h^{-2}\delta\right) \\
    &= H^{-1}\alpha + H^{-1}(n^{-1/2}h^{-2}\delta)-H^{-1}\Delta H^{-1}\alpha - H^{-1}\Delta H^{-1}(n^{-1/2}h^{-2}\delta) \\
    &\qquad\qquad + O_p\left\{\max\{n^{-1}h^{-8},n^{-2}h^{-12}\}\lambda^{-3}+ \max\{n^{-1}h^{-8},n^{-2}h^{-12}\}\lambda^{-3}n^{-1/2}h^{-2}\right\} \\
    &= \gamma+H^{-1}\left(n^{-1/2}h^{-2}\delta-\Delta\gamma\right) + O_p\left\{\max\{n^{-1}h^{-8},n^{-2}h^{-12}\}\lambda^{-3}\right\}.  
\end{align*}
This implies that
\begin{align*}
        \|\hat{\gamma}-\gamma\|=O_p\left\{\max\{n^{-1/2}h^{-4}, n^{-1}h^{-6}\}\lambda^{-1}\|\gamma\| + \max\{n^{-1}h^{-8},n^{-2}h^{-12}\}\lambda^{-3}\right\}.
    \end{align*}
\end{proof}

\subsection*{V. Stochastic expansion of $\hat{g}_p(x)-g_p(x)$}
Below we present the stochastic expansion of the difference between $g_p(x)$ and its estimator $\hat{g}_p(x)$ . 

\begin{proposition} \label{prop:estimation_rate}
If $n^{-1/2}h^{-4}\to 0$ as $n\to\infty$ and $h\to 0$, then
    \begin{equation}
        \begin{split}
            &\hat{g}_p(x)-g_p(x)-(\bar{Y}-\E[Y]) = \\
            &\sum_{j=1}^{p}\left[(\hat{\gamma}_j-\gamma_j)\int (x-\E(X))K^j(b) \;+ \right. \\
    &\left. \qquad\qquad \gamma_j\int \left\{n^{-1/2}(x-\E(X))\zeta_j - (\bar{X}-\E(X))K^j(b)\right\} \right] \;+ \\
            &O_p\left[\max\{n^{-1}h^{-8},n^{-2}h^{-12}\}\lambda^{-3}+n^{-1}h^{-4}\max\{n^{-1/2}h^{-4},n^{-1}h^{-6}\}\lambda^{-1}\|\gamma\|\right] . 
        \end{split}
    \end{equation}
    This implies that 
    \begin{align*}
        \|\hat{g}_p(x)-g_p(x)\| =O_p\left\{n^{-1/2}h^{-4}\lambda^{-1}+n^{-1}h^{-8}\lambda^{-3}\right\}.
    \end{align*}
\end{proposition}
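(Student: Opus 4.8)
The plan is to obtain the displayed expansion by a direct, bookkeeping substitution of the three first-order expansions already in hand, and then to read off the $O_p$ rate from the bound on $\|\hat\gamma-\gamma\|$ in Part IV. First I would write the difference exactly. Since $g_p(x)=\E[Y]+\sum_{j=1}^p\gamma_j\int(x-\E[X])K^j(b)$ and, with the empirical centering used in the prediction rule, $\hat g_p(x)=\bar Y+\sum_{j=1}^p\hat\gamma_j\int(x-\bar X)\hat K^j(b)$, subtracting the two and moving $\bar Y-\E[Y]$ to the left gives
\[
\hat g_p(x)-g_p(x)-(\bar Y-\E[Y])=\sum_{j=1}^p\Bigl\{\hat\gamma_j\!\int(x-\bar X)\hat K^j(b)-\gamma_j\!\int(x-\E[X])K^j(b)\Bigr\}.
\]
Into each summand I would substitute $\hat\gamma_j=\gamma_j+(\hat\gamma_j-\gamma_j)$, the Part II expansion $\hat K^j(b)=K^j(b)+n^{-1/2}\zeta_j+n^{-1}\eta_j$, and the centering identity $\bar X=\E[X]+(\bar X-\E[X])$.

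Multiplying out, the terms that are first order in exactly one small factor are precisely the two displayed families: $(\hat\gamma_j-\gamma_j)\int(x-\E[X])K^j(b)$, arising from perturbing the coefficient, and $\gamma_j\{n^{-1/2}\int(x-\E[X])\zeta_j-\int(\bar X-\E[X])K^j(b)\}$, arising from perturbing the kernel iterate and from centering $x$ by $\bar X$ rather than $\E[X]$. Every remaining summand is a product of at least two small quantities, e.g. $(\hat\gamma_j-\gamma_j)\,n^{-1/2}\int(x-\E X)\zeta_j$, $\gamma_j\,n^{-1}\int(x-\E X)\eta_j$, or $\gamma_j\int(\bar X-\E X)\{n^{-1/2}\zeta_j+n^{-1}\eta_j\}$, and these I would collect into the remainder.

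Next I would bound the remainder by pairing the sizes already established: $\|\hat\gamma-\gamma\|$ from Part IV, $\|\zeta_j\|=O_p(h^{-2}j\|K\|^j)$ and $\|\eta_j\|=O_p(h^{-4}\|K\|^j)$ from Part II, and $\|\bar X-\E X\|=O_p(n^{-1/2})$ by the central limit theorem for the i.i.d.\ functions $X_i$. Summing the resulting products over $j=1,\dots,p$ with $p=O(n^{1/2}h^{-2})$, and using $\|K^\ell\|=O(\theta_1^\ell)$ so that the weighted geometric series $\sum_j j\|K\|^j$ stay controlled, I would verify that the aggregate remainder is $O_p[\max\{n^{-1}h^{-8},n^{-2}h^{-12}\}\lambda^{-3}+n^{-1}h^{-4}\max\{n^{-1/2}h^{-4},n^{-1}h^{-6}\}\lambda^{-1}\|\gamma\|]$, which under the rate hypotheses is dominated by the leading rate.

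Finally I would extract the rate by taking the predictive norm at $x=X_0$, using that $X_0$ is independent of the training data. The key device is the identity $\Cov\bigl(\int(X_0-\E X)K^j(b),\int(X_0-\E X)K^k(b)\bigr)=\int\!\!\int K^j(b)(s)K(s,t)K^k(b)(t)\,ds\,dt=h_{jk}$, which turns the predictive second moments of the two leading families into quadratic forms in $H$; the coefficient family then contributes at the order of $\|\hat\gamma-\gamma\|$, and treating $\|\gamma\|=O(1)$ collapses the bound to $O_p\{n^{-1/2}h^{-4}\lambda^{-1}+n^{-1}h^{-8}\lambda^{-3}\}$, matching $\|\hat\gamma-\gamma\|$, while the second family is of strictly smaller order. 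I expect the main obstacle to be the remainder bookkeeping of the third step: the Part II bounds on $\zeta_j,\eta_j$ grow with $j$ while $\lambda=\lambda_p(H)$ shrinks as more PLS directions are retained, so the cross terms threaten to accumulate above the leading order. Keeping every one of them subdominant simultaneously is exactly what forces both the restriction that $p$ diverge no faster than $n^{1/2}h^{-2}$ and the joint rate condition $n^{-1/2}h^{-4}\lambda^{-1}+n^{-1}h^{-8}\lambda^{-3}\to0$; reconciling these competing constraints, rather than any single estimate, is the delicate point.
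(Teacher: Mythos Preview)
Your proposal is correct and follows essentially the same route as the paper: write the difference exactly, substitute the three perturbation expansions $\hat\gamma_j=\gamma_j+(\hat\gamma_j-\gamma_j)$, $\hat K^j(b)=K^j(b)+n^{-1/2}\zeta_j+n^{-1}\eta_j$, and $\bar X=\E[X]+(\bar X-\E[X])$, separate the first-order terms from the cross terms, and bound the latter using the sizes from Parts II and IV. The paper organizes the bookkeeping by first splitting into two summations and then labeling the pieces $(*),(**),(***)$, while you multiply out directly and group by order, but the substance is identical; your explicit use of the identity $\Cov\bigl(\int(X_0-\E X)K^j(b),\int(X_0-\E X)K^k(b)\bigr)=h_{jk}$ to convert the predictive second moment into a quadratic form in $H$ is a slightly cleaner justification of the final rate than the paper's more implicit bounding of $(*)$.
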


\begin{proof}
    
Note that, 
\begin{align*}
    \hat{g}_p(x)-g_p(x)&-(\bar{Y}-\E(Y)) \\
    &= \sum_{j=1}^{p}\left[(\hat{\gamma}_j-\gamma_j)\int (x-\E(X))K^j(b) \;+ \right. \\
    &\left. \qquad\qquad \gamma_j\int (x-\E(X))\{\hat{K}^j(b)-K^j(b)\} - \gamma_j\int (\bar{X}-\E(X))K^j(b) \right] \;+ \\
    & \quad\sum_{j=1}^{p}\left[(\hat{\gamma}_j-\gamma)\int(x-\E(X))\{\hat{K}^j(b)-k^j(b)\} \; - \right.\\
    &\left. \qquad\qquad (\hat{\gamma}_j-\gamma)\int(\bar{X}-\E(X))K^j(b)-\hat{\gamma}_j\int(\bar{X}-\E(X))\{\hat{K}^j(b)-K^j(b)\}\right].
\end{align*}
Note that, the second summation is of order
\begin{align*}
    O_p\left\{n^{-1/2}\sum_{j=1}^{p}\left(|\hat{\gamma}_j-\gamma_j|+n^{-1/2}|\hat{\gamma}_j|\right)\left(n^{-1/2}h^{-2}\|K\|^j\right)\right\}.
\end{align*}
With this, and recalling the asymptotic expansion $\hat{K}^j(b)=K^{j}(b)+n^{-1/2}\zeta_j+n^{-1}\eta_j$ and noting that $\|\eta\|_j=O_p\{h^{-4}\|K\|^{j}\}$, we can write $\hat{g}_p(x)-g_p(x)-(\bar{Y}-\E(Y))$ as
\begin{align*}
    \hat{g}_p(x)-g_p(x)&-(\bar{Y}-\E(Y)) \\
    &= \sum_{j=1}^{p}\left[H^{-1}(n^{-1}h^{-2}\delta-\Delta\gamma)\int (x-\E(X))K^j(b) \;+ \right. \\
    &\left. \qquad\qquad \gamma_j\int \left\{n^{-1/2}(x-\E(X))\zeta_j - (\bar{X}-\E(X))K^j(b)\right\} \right] \;+ \\ 
    &\qquad O_p\left[\max\{n^{-1}h^{-8},n^{-2}h^{-12}\}\lambda^{-3}\int(x-\E(X))K^j(b)\right] \;+ \\
    &\qquad O_p\left[n^{-1}\gamma_j\sum_{j=1}^{p}\int(x-\E(X))h^{-4}\|K\|^j\right]\;+ \\
    &\qquad O_p\left[n^{-1/2}\sum_{j=1}^{p}\left(|\hat{\gamma}_j-\gamma_j|+n^{-1/2}|\hat{\gamma}_j|\right)\left(n^{-1/2}h^{-2}\|K\|^j\right)\right].
\end{align*}
Let 
\begin{align*}
    (*) &= \sum_{j=1}^{p}\left[(\hat{\gamma}_j-\gamma_j)\int (x-\E(X))K^j(b) +  \gamma_j\int \left\{n^{-1/2}(x-\E(X))\zeta_j - (\bar{X}-\E(X))K^j(b)\right\} \right] \\
    (**) &= O_p\left[\max\{n^{-1}h^{-8},n^{-2}h^{-12}\}\lambda^{-3}\int(x-\E(X))K^j(b)\right] \\
    (***) &= O_p\left[n^{-1}\gamma_j\sum_{j=1}^{p}\int(x-\E(X))h^{-4}\|K\|^j\right]\;+ \\
    &\;\quad O_p\left[n^{-1/2}\sum_{j=1}^{p}\left(|\hat{\gamma}_j-\gamma_j|+n^{-1/2}|\hat{\gamma}_j|\right)\left(n^{-1/2}h^{-2}\|K\|^j\right)\right].
\end{align*}
By incorporating the asymptotic bound for $\|\hat{\gamma}-\gamma\|$ and  $\|\Delta\|$, we have
\begin{align*}
    (*) &= O_p\left\{\lambda^{-1}n^{-1/2}h^{-2} + \lambda^{-1}\max\{n^{-1/2}h^{-4},n^{-1}h^{-6}\}\right\} \\
    &= O_p\left[\max\{n^{-1/2}h^{-4},n^{-1}h^{-6}\}\lambda^{-1}\right],
\end{align*}
\begin{align*}
    (***) = O_p\left\{n^{-1}h^{-4}\|K\|^j\|\hat{\gamma}-\gamma\|+n^{-1}h^{-4}\|\gamma\|\right\},
\end{align*}

\begin{align*}
    (**)+(***) &= O_p\left[\max\{n^{-1}h^{-8},n^{-2}h^{-12}\}\lambda^{-3}\right] \;+ \\    &O_p\left[n^{-1}h^{-4}h^{-2}n^{-1/2}\lambda^{-1}+n^{-1}h^{-4}\lambda^{-1}\|\gamma\|\max\{n^{-1/2}h^{-4},n^{-1}h^{-6}\} \;+ \right.\\
    &\qquad\left. n^{-1}h^{-4}\max\{n^{-1}h^{-8},n^{-2}h^{-12}\lambda^{-3}\}\right] \\
    &= O_p\left[\max\{n^{-1}h^{-8},n^{-2}h^{-12}\}\lambda^{-3}+n^{-1}h^{-4}\max\{n^{-1/2}h^{-4},n^{-1}h^{-6}\}\lambda^{-1}\|\gamma\|\right].
\end{align*}
Thus, given $n^{-1/2}h^{-4}\to 0$,
\begin{align*}
    (*)+(**)+(***) &= O_p\left[\max\{n^{-1/2}h^{-4},n^{-1}h^{-6}\}\lambda^{-1}+n^{-1}h^{-4}\max\{h^{-4},n^{-1}h^{-8}\}\lambda^{-3}\right] \\
    &= O_p\left\{n^{-1/2}h^{-4}\lambda^{-1}+n^{-1}h^{-8}\lambda^{-3}\right\}.
\end{align*}

\end{proof}

\begin{proof}[Proof of Theorem \ref{thm:predictive_consistency}]
    Proof follows along the same line as in the proof of (5.11) in \cite{Delaigle_Hall_2012}, using the rate obtained in Proposition \ref{prop:estimation_rate}.
\end{proof}

\section{Additional Simulations}\label{appendix:extra_simulation}
In this section we present a numerical investigation of the effect of the smoothing parameter $h$ on the predictive performance of $P^3LS$ for the simulation setting presented in Section 3.1 of the main manuscript. To do so, we applied 4-fold cross validation and computed the average rMSPE of the 4 folds. More precisely, for a given $h$, and for the 4 index sets $\mathcal{M}_1=\{1,\dots,100\}-\{1,\dots,25\}, \mathcal{M}_2=\{1,\dots,100\}-\{26,\dots,50\}, \mathcal{M}_3=\{1,\dots,100\}-\{51,\dots,75\}, \mathcal{M}_1=\{1,\dots,100\}-\{76,\dots,100\} $ we applied the $P^3LS$ to fit the predictive model (1) and used the estimated model to predict the remaining observations not included in the index set and computed the $rMSPE$s $\mathcal{E}_1,\dots,\mathcal{E}_4$, respectively. Finally, we computed the average $rMSPE$s of the 4 folds, i.e. $(\mathcal{E}_1+\mathcal{E}_1+\mathcal{E}_2+\mathcal{E}_3+\mathcal{E}_4)/4$. Box plots of the average $rMSPE$s described above as a function of $h$ are illustrated in Figures \ref{fig:Boxplots_X3Bc_fun2_h_cv}, \ref{fig:Boxplots_X3Bcon_fun2_h_cv}, \ref{fig:Boxplots_X3Bcon2_fun2_h_cv}, and \ref{fig:Boxplots_X3Bcon3_fun2_h_cv}. As plots illustrate, the average $rMSPE$s do not exhibit a notable change across the range of values considered for $h$. However, smaller values of $h$ seem to be favorable for prediction purposes.

\begin{figure}[H]
    \centering
    \includegraphics[width=6.5in, height=3.5in]{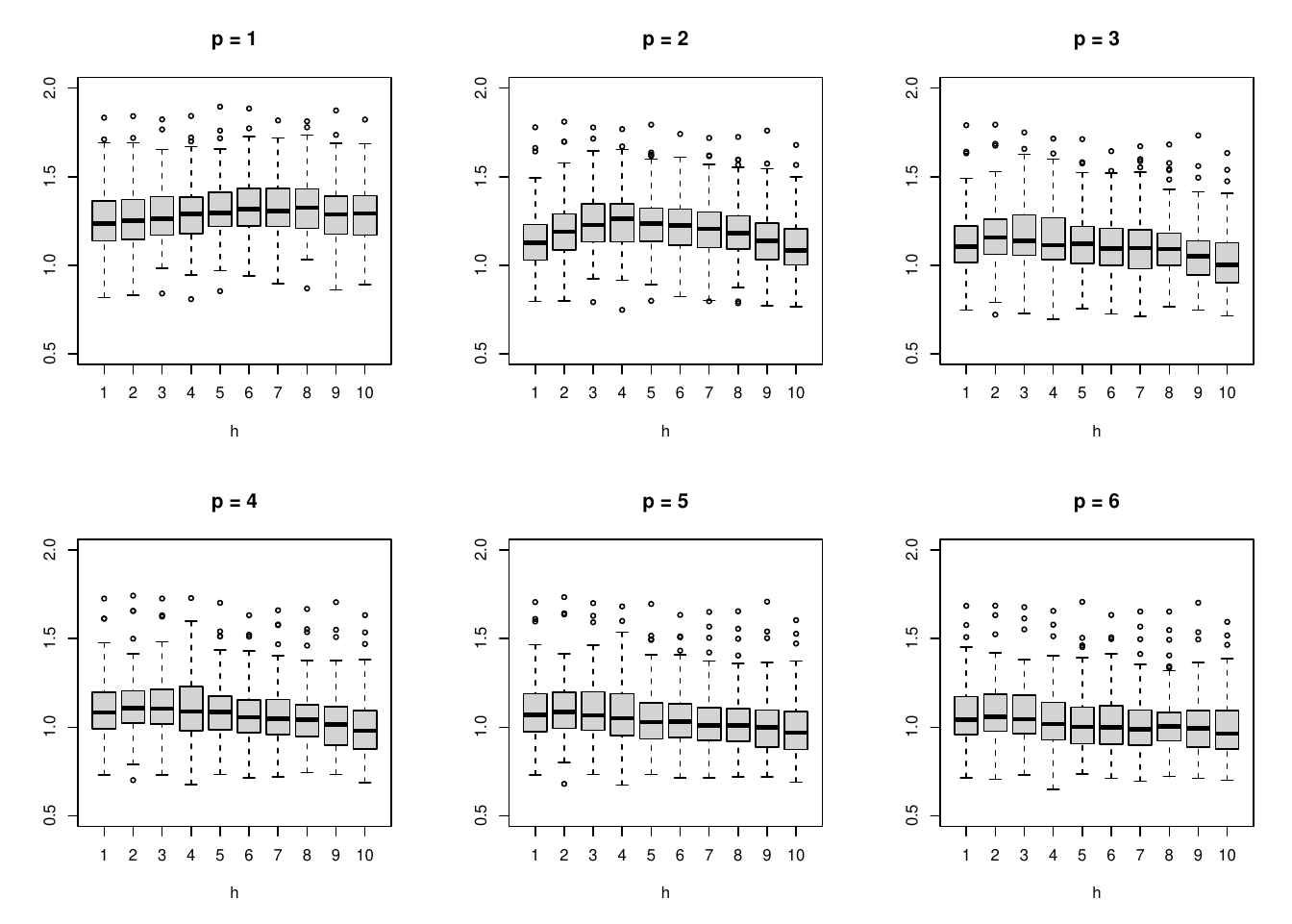}
    \caption{Case 1}
    \label{fig:Boxplots_X3Bc_fun2_h_cv}
\end{figure}

\begin{figure}[H]
    \centering
    \includegraphics[width=6.5in, height=3.5in]{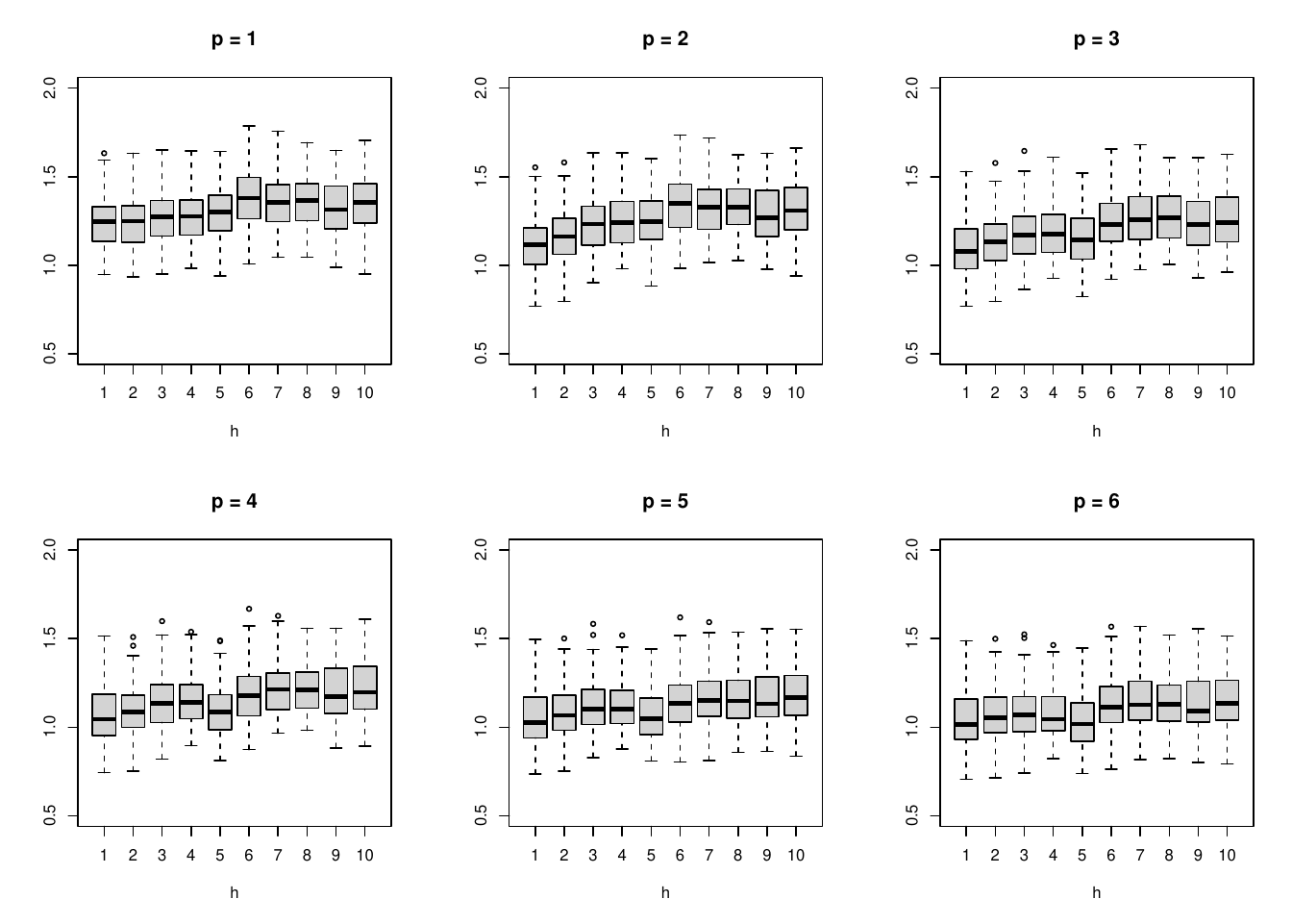}
    \caption{Case 2}
    \label{fig:Boxplots_X3Bcon_fun2_h_cv}
\end{figure}

\begin{figure}[H]
    \centering
    \includegraphics[width=6.5in, height=3.5in]{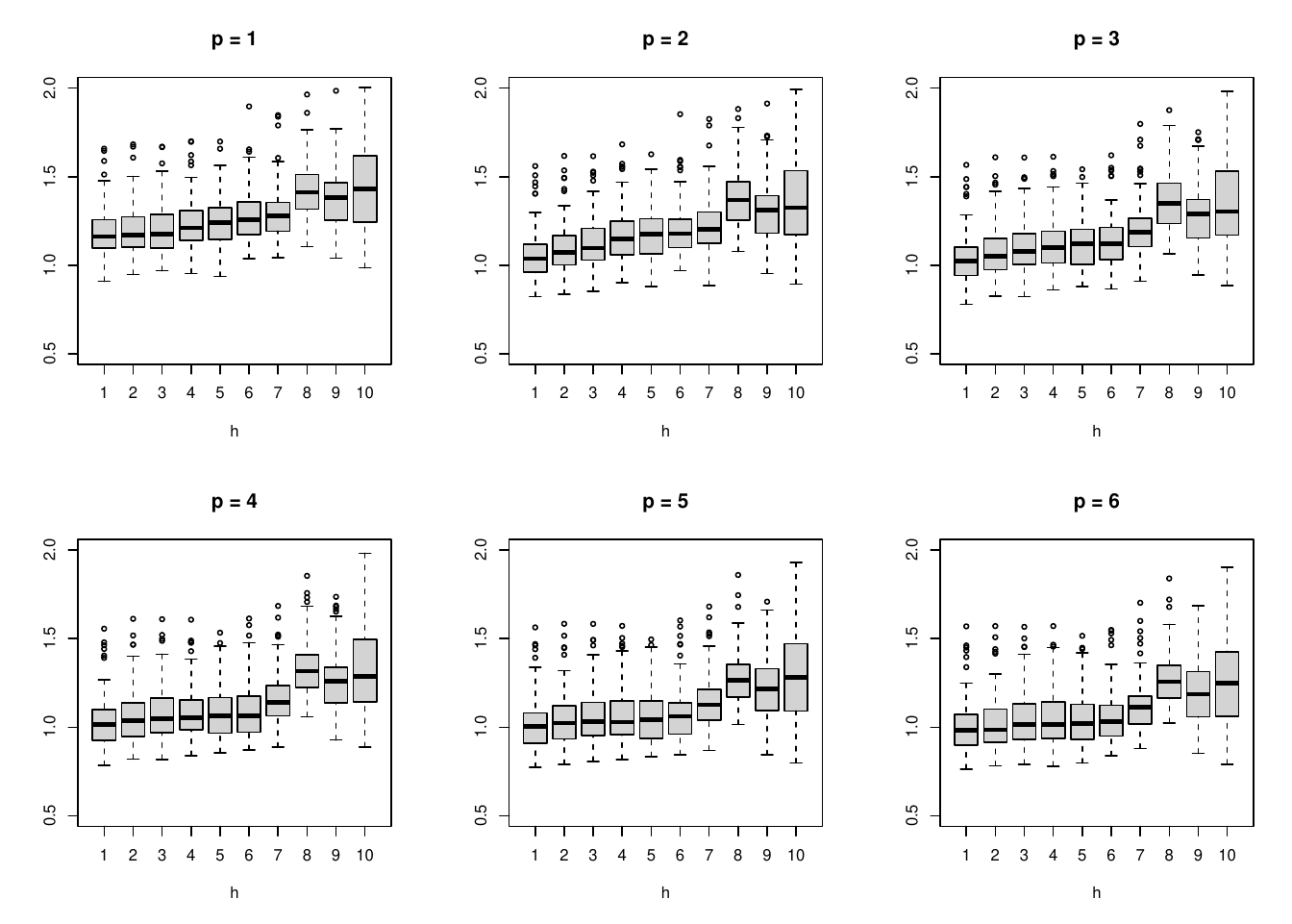}
    \caption{Case 3}
    \label{fig:Boxplots_X3Bcon2_fun2_h_cv}
\end{figure}

\begin{figure}[H]
    \centering
    \includegraphics[width=6.5in, height=3.5in]{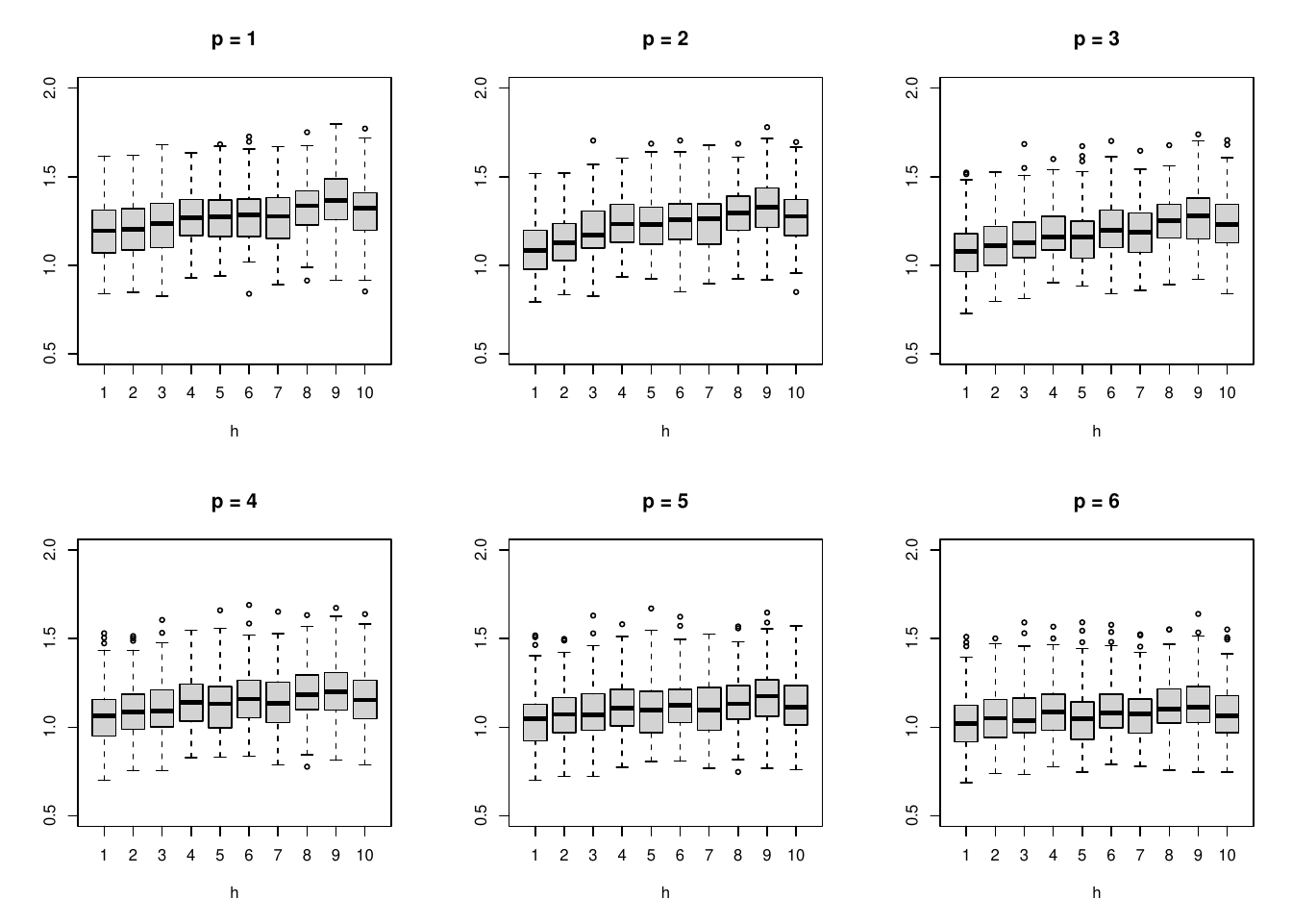}
    \caption{Case 4}
    \label{fig:Boxplots_X3Bcon3_fun2_h_cv}
\end{figure}

\label{lastpage}

\end{appendices}

\end{document}